\newtheorem{assumption}{Assumption}
\newcommand{\PP}{\mathbb{P}}
\newcommand{\E}{\mathbb{E}}
\newcommand{\Indi}{\mathds{1}}
\newcommand{\nn}{\nonumber}
\newcommand{\muB}{\boldsymbol{\mu}}
\newcommand{\calP}{\mathcal{P}}
\newcommand{\OO}{\text{O}}
\newcommand{\bfK}{\textbf{K}(T)}
\newtheorem{lemma}{Lemma}
\newtheorem{theorem}{Theorem}
\newtheorem{definition}{Definition}
\title{Regret of Age-of-Information Bandits}
 \author{Santosh Fatale, Kavya Bhandari, Urvidh Narula, Sharayu Moharir,\\ and Manjesh Kumar Hanawal \\
 Indian Institute of Technology Bombay}
\newfont{\mycrnotice}{ptmr8t at 7pt}
\newfont{\myconfname}{ptmri8t at 7pt}
\begin{document}
\maketitle

\begin{abstract}
	We consider a system with a single source that measures/tracks a time-varying quantity and periodically attempts to report these measurements to a monitoring station. Each update from the source has to be scheduled on one of $K$ available communication channels. The probability of success of each attempted communication is a function of the channel used. This function is unknown to the scheduler. 
	
	The metric of interest is the Age-of-Information (AoI), formally defined as the time elapsed since the destination received the recent most update from the source. We model our scheduling problem as a variant of the multi-arm bandit problem with communication channels as arms. We characterize a lower bound on the AoI regret achievable by any policy and characterize the performance of UCB, Thompson Sampling, and their variants. Our analytical results show that UCB and Thompson sampling are order-optimal for AoI bandits. In addition, we propose novel policies which, unlike UCB and Thompson Sampling, use the current AoI to make scheduling decisions. Via simulations, we show the proposed AoI-aware policies outperform existing AoI-agnostic policies. 
\end{abstract}

\section{Introduction}
We consider a learning problem that focuses on the metric of Age of Information (AoI), introduced in \cite{kaul2012real}. AoI is formally defined as the time elapsed since the destination received the recent most update from the source. It follows that AoI is a measure of the freshness of the data available at the intended destination which makes it a suitable metric for time-sensitive systems like smart homes, smart cars, and other IoT based systems. Since its introduction, AoI has been used in areas like caching, scheduling, energy harvesting, and channel state information estimation. \footnote{A preliminary version of this work appeared in the proceedings of WiOpt 2020 \cite{bhandari2020AoI_Bandits}.} 

We focus on a system consisting of a single source that measures/tracks a time-varying quantity. The source updates a monitoring station by sending periodic updates using any one of $K$ available communication channels at a given time (Figure \ref{fig:system_model}). The probability of an attempted update succeeding is independent across communication channels and independent and identically distributed (i.i.d.) across time-slots for each channel. Channel statistics are unknown to the scheduler. AoI increases by one on each failed update and resets to one on each successful update. The goal is to determine which communication channel to use in each time-slot in order to minimize the cumulative AoI over a finite time-interval of $T$ consecutive time-slots. We view our work as a key first step towards studying real IoT-type systems which have multiple sensors updating a central monitoring station via multiple communication channels.

Like the standard multi-arm bandit (MAB) problem and its numerous variants, our scheduling problem experiences a trade-off between exploring the various communication channels and exploiting the most promising communication channel, as observed from past observations. Henceforth, we refer to our problem as \emph{AoI bandits}. The pseudo-regret of a policy at time $T$ as the difference between the cumulative AoI in the first $T$ time-slots under that policy and the cumulative AoI in the first $T$ time-slots by the ``genie" policy which uses the (statistically) best channel in each time-slot.  

The key difference between our problem and the classical MAB problem is that since AoI is correlated across time-slots, the scheduling decision made in a time-slot has a cascading effect on the regret accumulated in future time-slots. Variants of the classical MAB problem like queuing bandits \cite{krishnasamy2016learning, stahlbuhk2018learning} also exhibit this characteristic. This time-correlation has significant implications for both algorithm design and analysis. Specifically, the performance analysis of policies for AoI bandits requires a novel approach where we upper bound the regret accumulated in all future time-slots as a result of using a sub-optimal channel in a time-slot. In addition, since the potential regret accumulated in a time-slot is a function of the current AoI, it is crucial to incorporate the current AoI in making scheduling decisions. We refer to policies that do this as \emph{AoI-aware policies} and refer to policies that do not incorporate this information into their decision making as \emph{AoI-agnostic policies}. Popular policies like UCB and Thompson {\color{black}Sampling} are AoI-agnostic as they make decisions based only on the number of times each channel is used and the total number of successful transmissions on each channel. 

\begin{figure}[t]
	\centering
	\includegraphics[scale=0.45]{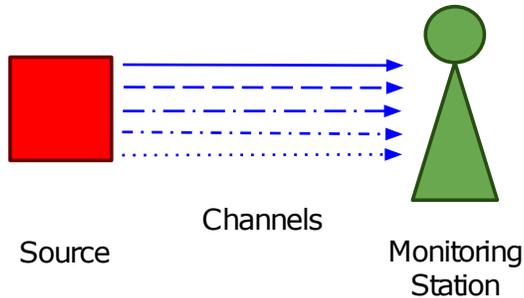}
	\caption{A system consisting of a source, a monitoring station, and five communication channels. The source tracks a time-varying quantity and sends periodic updates to the monitoring station using any one of the five channels for each update.}
	\label{fig:system_model}
\end{figure}

\subsection{Our Contributions}


\emph{Lower bound on AoI regret}: We show that the AoI regret of any $\alpha$-consistent policy is $\Omega(K\log T)$.
\color{black}

\emph{Performance of AoI-agnostic policies}: We show that the AoI regret of UCB \cite{auer2002finite} and Thompson Sampling \cite{thompson1933likelihood} is $\OO(K \log T)$, thus making them order optimal. In addition, we show that the AoI regret of Q-UCB  \cite{krishnasamy2016learning} and Q-Thompson Sampling \cite{krishnasamy2016learning} is $\OO(K \log^4 T)$.

\emph{New AoI-aware policies}: We propose variants of UCB, Thompson {\color{black}Sampling}, Q-UCB, and Q-Thompson Sampling which work in two phases. When AoI is ``low", the variants mimic the corresponding original policies and when AoI is ``high", the variants only exploit based on past observations. Via simulations, we show that the proposed variants outperform the original AoI-agnostic policies.

\subsection{Related Work}

In this section, we primarily focus on AoI based work most relevant to our setting. We refer the reader to \cite{kosta2017age} for a comprehensive survey of AoI-based works.

Scheduling to minimize AoI has been explored in a variety of settings \cite{Sombabu:2018:AAS:3241539.3267734, tripathi2017age, tripathi2019whittle, jhunjhunwala2018age, hsu2017scheduling, kadota2018optimizing}. The key difference between these works and our work is that in these works, channel statistics and/or channel state information is assumed to be known, whereas we work in the setting where channel statistics are unknown and have to be learned. In addition, some of these works focus on the infinite time-horizon and evaluate the steady-state performance, whereas we provide finite-time guarantees.

A multi-arm bandit based approach to scheduling problems to minimize queue-length is the focus of \cite{krishnasamy2016learning, stahlbuhk2018learning,cox1991queues,buyukkoc1985cmu,van1995dynamic, 
	lott2000optimality,
	ayesta2017scheduling,nino2007dynamic,mahajan2008multi,larrnaaga2016dynamic,Gittins2011}. 
	The focus in \cite{cox1991queues,buyukkoc1985cmu,van1995dynamic, 
		lott2000optimality,
		ayesta2017scheduling,nino2007dynamic,mahajan2008multi,larrnaaga2016dynamic,Gittins2011} is on the infinite horizon problem, whereas \cite{krishnasamy2016learning, stahlbuhk2018learning} focus on the finite horizon setting. The key difference between \cite{krishnasamy2016learning, stahlbuhk2018learning} and our work is that their metric is queue-length regret whereas we focus on AoI regret. We evaluate the performance of policies proposed in \cite{krishnasamy2016learning} for our metric. The policies proposed in \cite{stahlbuhk2018learning} cannot be applied in our setting due to the difference in the evolution of queue-length and AoI. 
\color{black}

A large part of the body of work focused on the AoI metric considers the setting where packets from the source(s) enter a queue and wait to be served, i.e., sent to the destination. The goal in these works is to study the effect of various queueing/service models on the resulting AoI. We refer the reader to \cite{tripathi2019age} for a comprehensive discussion of this body of work. Our setting differs from this body of work as we consider a system without a queue such that in each time-slot, the source attempts to send a fresh packet. AoI-aware scheduling in larger networks, unlike the point-to-point network studied here has also been studied \cite{shreedhar2018acp, yates2018age, yates2020age, talak2017minimizing, bedewy2019age}. 
\section{Setting}
\subsection{Our System}
We consider a system with a source and a monitoring station. The source tracks/measures a time-varying quantity and relays its measurements to the monitoring station via  $K$ communication channels as shown in Figure \ref{fig:system_model}. We use $C_i$, $1 \leq i \leq K$ to denote the $K$ channels. Time is divided into slots. In each time-slot, the source attempts to update the monitoring station by sending its current measurement via one of the $K$ communication channels. Each attempted communication via $C_i$ is successful with probability $\mu_i$ and unsuccessful otherwise, independent of all other channels and across time-slots. The values of the $\mu_i$s are unknown to the scheduler. 

\subsection{Metric: Age-of-Information Regret (AoI Regret)}

The age-of-information is a metric that measures the freshness of information available at the monitoring station. It is formally defined as follows. 


\begin{definition}[Age-of-Information (AoI)]\label{def:AoI_regret}
Let $a(t)$ denote the AoI at the monitoring station in time-slot $t$ and $u(t)$ denote the index of the time-slot in which the monitoring station received the latest update from the source before the beginning of time-slot $t$. Then, 
$a(t) = t- u(t).$
By definition, 
\begin{align*}
a(t) = 
\begin{cases}
1 & \text{if the update in slot $t-1$ succeeds} \\
a(t-1) + 1 & \text{otherwise.}
\end{cases}
\end{align*}  
\end{definition}

Let $a_{\mathcal{P}}(t)$ be the AoI in time-slot $t$ under a given {\color{black}policy} $\mathcal{P}$, and let $a^*(t)$ be the corresponding AoI under the genie policy that always uses the optimal channel, i.e., $C_{k^*}$, where
$\displaystyle k^* = \arg \max_{1\leq k \leq K} \mu_k.$ We define the AoI regret at time $T$ as the cumulative difference in expected AoI for the two policies in time-slots 1 to $T$. 
\begin{definition}[Age-of-Information Regret (AoI Regret)]
	\label{defn:regret}
	AoI regret under {\color{black}policy} $\mathcal{P}$ is denoted by $R_{\mathcal{P}}(T)$ and
\begin{align}
\label{eq:regret}
R_{\mathcal{P}}(T) = \sum_{t = 1}^{T} \E[a_{\mathcal{P}}(t) - a^*(t)].
\end{align}  
\end{definition}

For concreteness and technical convenience, we make the following assumption on the initial state of the system.

\begin{assumption}[Initial Conditions]
	\label{assumption:initialConditions}
	The system starts operating in time-slot $t = -\infty$ and the source sends an update to the monitoring station using one channel in each time-slot. Any candidate {\color{black}policy} starts making scheduling decisions at $t=1$. The {\color{black}policy} does not use information from observations in time-slots $t \leq 0$ to make decisions in time-slots $t \geq 1$. 
%
\end{assumption}

The goal is to design a scheduling policy/algorithm\footnote{We use the terms policy and algorithm interchangeably.} to minimize AoI regret (Definition \ref{defn:regret}).

\section{Main Results and Discussion}
\label{sec:mainResults}
In this section, we state and discuss our main results. A summary of our analytical results is provided in Table \ref{table:resultsSummary}. In addition to this, we propose new policies and compare the performance with known policies via simulations. 

\begin{table}[h!]
	\begin{center}
		\begin{tabular}{|l|c|} 
			\hline
		\hspace{0.37in}\textbf{Algorithm} & \textbf{Regret} \\
			\hline
				\hline
			Any $\alpha-$consistent policy & $\Omega(K \log T)$ \\
			\hline
			UCB \cite{auer2002finite} & $\OO(K \log T)$\\
			\hline
			Thompson Sampling \cite{thompson1933likelihood} & $\OO(K \log T)$ \\
			\hline
			Q-UCB \cite{krishnasamy2016learning} & $\OO(K \log^4 T)$\\
			\hline
			Q-Thompson Sampling \cite{krishnasamy2016learning} & $\OO(K \log^4 T)$ \\
			\hline
		\end{tabular}
			\caption{Summary of our analytical results}
			\label{table:resultsSummary}
	\end{center}
\end{table}

\subsection{Lower Bound on AoI Regret}
We characterize the limit on the performance of any $\alpha-$consistent policy defined as follows.
\begin{definition}{($\alpha-$consistent policies \cite{lai1985asymptotically})}
	Let $k(s)$ denote the index of the channel scheduled in time-slot $s$ and let $k^* = \arg \max_{1\leq k \leq K} \mu_k$. A scheduling policy is said to be an $\alpha-$consistent policy for $\alpha\in(0,1)$, if for any channel success probability vector $\muB$, there exists a constant $C(\muB)$ such that $$\E\left[\sum_{s=1}^{t}\Indi\{k(s)=k\}\right]\leq C(\muB)t^\alpha, \ \forall k\neq k^*.$$
\end{definition}

\begin{theorem}\label{theo:LB_AoI_regret} (Lower Bound) Given a problem instance $\muB$, let $\displaystyle \mu_{\text{min}} = \min_{i=1:K} \mu_i$, $\displaystyle \mu^* = \max_{i=1:K} \mu_i$, $\displaystyle k^* = \arg\max_{k=1:K} \mu_k$. For any $\alpha-$consistent policy $\calP$,
	\begin{align*}
	R_{\calP}(T)\geq& \frac{(K-1)D(\muB)}{\mu^*}\left((1-\alpha)\log T-\log(4KC)\right),
	\end{align*}
	where $D(\muB)=\frac{\Delta}{\text{KL}\left(\mu_{\text{min}},\frac{\mu^* +1}{2}\right)}$, $\displaystyle \Delta=\mu^{*}-\max_{k\neq k^*}\mu_k.$
\end{theorem}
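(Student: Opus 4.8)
The plan is to split the argument into two essentially independent pieces: (i) an exact identity that converts AoI regret into the expected number of sub-optimal channel uses, and (ii) the classical change-of-measure lower bound on that number for $\alpha$-consistent policies. Throughout write $N_k(T)=\sum_{s=1}^{T}\Indi\{k(s)=k\}$ and $\Delta_k=\mu^*-\mu_k$, and note $\Delta_k\geq\Delta$ for every $k\neq k^*$.

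First I would derive a conservation identity for the AoI process. Conditioning on the history through the channel choice in slot $t-1$, the recursion in Definition \ref{def:AoI_regret} gives $\E[a(t)\mid\mathcal{F}_{t-1}]=1+(1-\mu_{k(t-1)})a(t-1)$, hence $\E[a(t+1)]=1+\E[(1-\mu_{k(t)})a(t)]$. Summing over $t=1,\dots,T$ and telescoping the left-hand side yields the exact relation $\sum_{t=1}^{T}\E[\mu_{k(t)}a(t)]=T+\E[a(1)]-\E[a(T+1)]$. Writing $\mu_{k(t)}=\mu^*-\Delta_{k(t)}$ and solving for $\sum_t\E[a(t)]$ then gives $R_{\calP}(T)=\frac{1}{\mu^*}\big(\E[a(1)]-\E[a(T+1)]+\sum_{t=1}^{T}\E[\Delta_{k(t)}a(t)]\big)$. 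Since $a(t)\geq1$ and $\Delta_{k(t)}\geq\Delta\,\Indi\{k(t)\neq k^*\}$, the last sum is at least $\Delta\sum_{k\neq k^*}\E[N_k(T)]$. The boundary terms are harmless: $\E[a(1)]\geq1$ enters with a favorable sign, and since every slot fails with conditional probability at most $1-\mu_{\min}$ we have $\PP(a(T+1)>j)\leq(1-\mu_{\min})^{j}$, so $\E[a(T+1)]\leq\sum_{j\geq0}(1-\mu_{\min})^{j}=1/\mu_{\min}$, a constant independent of $T$. Hence $R_{\calP}(T)\geq\frac{\Delta}{\mu^*}\sum_{k\neq k^*}\E[N_k(T)]$ up to an additive $\OO(1)$ term dominated by the $\Omega(\log T)$ growth below.

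Second I would lower bound $\E[N_k(T)]$ for each fixed $k\neq k^*$ by a change-of-measure (Lai--Robbins) argument. Fix $\lambda=\frac{\mu^*+1}{2}$, which satisfies $\mu^*<\lambda<1$, and compare $\muB$ with the instance $\muB'$ obtained by raising only channel $k$'s parameter to $\lambda$, so that $k$ is the unique optimal channel under $\muB'$. Because the instances differ only in arm $k$, the KL divergence between the laws of the observations up to time $T$ equals $\E_{\muB}[N_k(T)]\,\text{KL}(\mu_k,\lambda)$. For $\mathcal{A}=\{N_k(T)\geq T/2\}$, $\alpha$-consistency under $\muB$ gives $\E_{\muB}[N_k(T)]\leq CT^{\alpha}$, so $\PP_{\muB}(\mathcal{A})\leq2CT^{\alpha-1}$ by Markov, while $\alpha$-consistency under $\muB'$ applied to the now-sub-optimal arms $j\neq k$ gives $\E_{\muB'}[T-N_k(T)]\leq(K-1)CT^{\alpha}$, so $\PP_{\muB'}(\mathcal{A})\geq1-2(K-1)CT^{\alpha-1}$. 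Applying the data-processing inequality to $\mathcal{A}$ together with a standard lower bound on the binary relative entropy then yields $\E_{\muB}[N_k(T)]\,\text{KL}(\mu_k,\lambda)\geq(1-\alpha)\log T-\log(4KC)$. Finally I would use that $p\mapsto\text{KL}(p,\lambda)$ is decreasing on $[0,\lambda]$ (so $\text{KL}(\mu_k,\lambda)\leq\text{KL}(\mu_{\min},\lambda)$) to replace $\mu_k$ by $\mu_{\min}$ uniformly, sum over the $K-1$ sub-optimal channels, and substitute into the identity of the first step.

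The main obstacle is the first step. In classical MAB the reduction from regret to sub-optimal pull counts is a one-line per-slot accounting, but here the temporal correlation of AoI makes the cost of a single sub-optimal pull depend on the whole future trajectory, so the naive accounting fails. The conservation identity is exactly what resolves this: after telescoping it linearizes the cascading effect and shows that each sub-optimal slot contributes at least $\Delta/\mu^*$ to the regret. The one point requiring care is that the boundary term $\E[a(T+1)]$ not erode the bound, which is handled by the uniform geometric-tail estimate $\E[a(T+1)]\leq1/\mu_{\min}$ valid for any policy. The second step is routine modulo the careful choice of the comparison event $\mathcal{A}$ and the bookkeeping of constants.
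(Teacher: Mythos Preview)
Your proof is correct but reaches the key intermediate inequality $R_{\calP}(T)\geq\frac{\Delta}{\mu^*}\sum_{k\neq k^*}\E[N_k(T)]$ by a genuinely different route than the paper. The paper obtains it via a \emph{coupling}: it drives all channels by common uniforms $U(t)$ so that $R_k(t)=\Indi\{U(t)\leq\mu_k\}$, which forces $S(t)\leq S^*(t)$ and hence $a^*(t)\leq a(t)$ pathwise; the per-slot inequality $a(t)-a^*(t)\geq(S^*(t)-S(t))\,a^*(t-1)$ then follows from the AoI recursion, and taking expectations together with $\E[a^*(t-1)]=1/\mu^*$ gives the bound \emph{without} boundary terms. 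Your telescoping identity $\sum_{t}\E[\mu_{k(t)}a(t)]=T+\E[a(1)]-\E[a(T+1)]$ bypasses the coupling entirely and linearizes the cascading dependence algebraically; this is arguably cleaner and sidesteps the somewhat delicate independence between $a^*(t-1)$ and the policy's choice $k(t)$ that the coupled system requires. The cost is the residual boundary term $-\E[a(T+1)]/\mu^*\geq-1/(\mu^*\mu_{\min})$, so you prove the stated inequality only up to an additive constant; since the underlying pull-count bound (the paper's Lemma~\ref{corr_1}) already holds only for $T>\tau$, this constant is harmless for the $\Omega(K\log T)$ conclusion, though strictly speaking it does not recover the theorem with the exact constants written. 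For the second step the paper simply invokes Corollary~20 of \cite{krishnasamy2016learning} (stated as Lemma~\ref{corr_1}) as a black box, whereas you re-derive the same inequality via the standard Lai--Robbins change of measure with the event $\{N_k(T)\geq T/2\}$; the content and constants agree.
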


We thus conclude that the AoI regret of any $\alpha-$consistent policy scales as $\Omega(K\log T)$. 

\subsection{AoI Regret of Popular AoI-agnostic Policies}

\begin{definition}[AoI-Agnostic Policies]
	A policy is AoI-agnostic if, given past scheduling decisions and the number of successful updates sent via each of the $K$ channels in the past, it does not explicitly use the AoI in a time-slot to make scheduling decisions.
\end{definition}
	
We now characterize the performance of four known AoI-agnostic policies, namely, UCB \cite{auer2002finite}, Thompson Sampling (TS) \cite{thompson1933likelihood}, Q-UCB \cite{krishnasamy2016learning}, and Q-Thompson Sampling (Q-TS) \cite{krishnasamy2016learning}. 

The UCB and Thompson {\color{black}Sampling} policies are known to perform well for MAB and Q-UCB, and Q-Thompson Sampling are variants of UCB and Thompson {\color{black}Sampling} respectively, proposed in \cite{krishnasamy2016learning} for queueing bandits. The difference between the original policies and their ``Q-" variants is that, in each time-slot, the variants force the policy to explore with a probability which decays with time, similar to the $\epsilon$-greedy {\color{black}policy} proposed in \cite{auer2002finite}. For the sake of completeness, we also provide a formal description of these policies. 

\begin{algorithm}[h]
	{
		\DontPrintSemicolon 
		\textbf{Initialise:} Set $\hat{\mu}_k=0$ to be the estimated success probability of Channel $k$, $T_k(0)=0$ $\forall$ $k\in[K]$.\;
		\While{$1\leq t \leq K$}{
			Schedule update on Channel $k(t)=t$\;
			Receive rewards $X_{k(t)}(t)\sim \text{Ber}(\mu_{k(t)})$\; $\hat{\mu}_{k(t)}=X_{k(t)}(t)$\;
			$T_{k(t)}(t)=1$\;
			$t=t+1$}
		\While{$t\geq K+1$}{
			Schedule update on Channel $k(t)=\arg \max_{k\in [K]}\hat{\mu}_k(t)+\sqrt{\frac{8\log t}{T_k(t-1)}}$ \\
			Receive reward $X_{k(t)}(t)\sim \text{Ber}(\mu_{k(t)})$\;
			$\hat{\mu}_{k(t)}=(\hat{\mu}_{k(t)}\cdot T_{k(t)}(t-1)+X_{k(t)}(t))/(T_{k(t)}(t-1)+1)$\;
			$T_{k(t)}(t)=T_{k(t)}(t-1)+1$\;
			$t=t+1$}
		\caption{{\sc Upper Confidence Bound (UCB)}}
		\label{algo:UCB}}
\end{algorithm}


{\color{black} \begin{theorem}\label{theo:UB_AoI_regret_UCB_impro}(Performance of UCB)
 	Consider any problem instance $\muB$ such that $\displaystyle k^* = \arg\max_{k=1:K} \mu_k$, $\displaystyle \mu_{\text{min}} = \min_{i=1:K} \mu_i>0$, $\displaystyle \mu^* = \max_{i=1:K} \mu_i$, and $\displaystyle \Delta = \mu^*-\max_{k\neq k^*}\mu_{k}$. Then, under Assumption 1, 
 	$$R_{\text{UCB}}(T)\leq\left\{\begin{array}{lc}
 	\frac{1-\mu^*}{\mu^*\mu_{\text{min}}} + \left(\frac{1}{\mu_{\text{min}}}-\frac{1}{\mu^{*}}\right)(K-1)\left(\frac{32 \log T}{\Delta^{2}} + 1 + \frac{\pi^{2}}{3} \right), & \text{for }T>
 	K\\
 	\left(\frac{1}{\mu_{\text{min}}}-\frac{1}{\mu^*}\right)T, & \text{for }T \leq K.
 	\end{array}\right.$$
 \end{theorem}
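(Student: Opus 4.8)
The plan is to reduce the AoI regret to the expected number of suboptimal channel uses through an exact identity, and then feed in the standard UCB pull-count bound. First I would record two facts about the expected instantaneous AoI. For the genie, the run of failures before slot $t$ is Geometric$(\mu^*)$, so $\E[a^*(t)] = 1/\mu^*$ for every $t$ under Assumption 1. For any policy, the probability that the updates in slots $t-1,\dots,t-i$ all fail is at most $(1-\mu_{\text{min}})^i$, since each conditional failure probability is $1-\mu_{k(s)}\le 1-\mu_{\text{min}}$; hence the failure run is stochastically dominated by a Geometric$(\mu_{\text{min}})$ variable and $\E[a(t)]\le 1/\mu_{\text{min}}$ for all $t$. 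The case $T\le K$ is then immediate: summing $\E[a(t)-a^*(t)]\le 1/\mu_{\text{min}}-1/\mu^*$ over the $T$ slots gives the second branch of the bound.

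For $T>K$ I would turn the one-step recursion $a(t+1)=a(t)+1-a(t)\Indi\{S_t\}$, where $S_t$ is the event that the update in slot $t$ succeeds, into an exact regret identity. Because the chosen channel $k(t)$ is measurable with respect to the past and the reward is a fresh $\mathrm{Ber}(\mu_{k(t)})$ draw, $\E[a(t)\Indi\{S_t\}]=\E[a(t)\,\mu_{k(t)}]$; telescoping $\E[a(t+1)]-\E[a(t)]$ from $1$ to $T$ and writing $\mu_{k(t)}=\mu^*-(\mu^*-\mu_{k(t)})$ yields
\[ R_{\text{UCB}}(T)=\frac{\E[a(1)]-\E[a(T+1)]}{\mu^*}+\frac{1}{\mu^*}\sum_{t=1}^{T}\E\!\left[a(t)\big(\mu^*-\mu_{k(t)}\big)\right]. \]
The first term is a $T$-independent boundary constant, controlled using $\E[a(T+1)]\ge 1$ and the bound on $\E[a(1)]$ supplied by Assumption 1; it contributes the additive $\frac{1-\mu^*}{\mu^*\mu_{\text{min}}}$ in the statement.

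The heart of the argument is the second term. Writing $\mu^*-\mu_{k(t)}=\sum_{k\neq k^*}(\mu^*-\mu_k)\Indi\{k(t)=k\}$, I would establish the decoupling bound $\sum_{t=1}^{T}\E[a(t)\Indi\{k(t)=k\}]\le \frac{1}{\mu_{\text{min}}}\E[T_k(T)]$, i.e. that the AoI carried on the slots where channel $k$ is used is, on average, at most $1/\mu_{\text{min}}$ per use. Combined with $\mu^*-\mu_k\le \mu^*-\mu_{\text{min}}$ this collapses the sum to $(1/\mu_{\text{min}}-1/\mu^*)\sum_{k\neq k^*}\E[T_k(T)]$. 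It then remains to bound the pull counts: because UCB uses the inflated confidence radius $\sqrt{8\log t/T_k(t-1)}$ (twice the classical UCB1 radius, which replaces the usual $8\log T/\Delta_k^2$ by $32\log T/\Delta_k^2$), the classical analysis gives $\E[T_k(T)]\le \frac{32\log T}{\Delta_k^2}+1+\frac{\pi^2}{3}$, and $\Delta_k\ge\Delta$ together with summing over the $K-1$ suboptimal channels produces exactly the first branch of the claimed bound.

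I expect the decoupling bound to be the main obstacle, because $a(t)$ and the event $\{k(t)=k\}$ are both functions of the shared history and are positively correlated: a recent run of failures simultaneously inflates $a(t)$ and depresses the empirical mean of the optimal channel, nudging UCB toward exploration. The clean factor $1/\mu_{\text{min}}$ must therefore come from a \emph{conditional} domination — for any fixed past, the length of the failure run ending at $t-1$ is dominated by a Geometric$(\mu_{\text{min}})$ variable uniformly over the history, so the correlation cannot inflate the per-use AoI beyond $1/\mu_{\text{min}}$. Making this uniform-in-history domination rigorous, rather than merely bounding the marginal $\E[a(t)]$, is the step that requires the most care.
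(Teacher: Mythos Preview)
Your telescoping identity is correct and the overall route---reduce AoI regret to $\frac{1}{\mu^*}\sum_t \E\big[a(t)(\mu^*-\mu_{k(t)})\big]$ and then to the suboptimal pull counts---is attractive and genuinely different from what the paper does. But the decoupling step you yourself flag as the obstacle is a real gap, and the justification you sketch does not close it. The bound $\sum_t \E[a(t)\Indi\{k(t)=k\}] \le \frac{1}{\mu_{\text{min}}}\E[T_k(T)]$ is equivalent to asking that $\E[a(t)\mid k(t)=k]\le 1/\mu_{\text{min}}$ on average over the pull times of arm $k$. Your argument is that ``for any fixed past'' the failure run is dominated by a Geometric$(\mu_{\text{min}})$ variable. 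But both $a(t)$ and $\{k(t)=k\}$ are $\mathcal{F}_{t-1}$-measurable: once you condition on the past, $a(t)$ is a number, not a random variable to be dominated; and if instead you build a pathwise majorant $\hat a(t)\ge a(t)$ from the same driving uniforms that generate the rewards, then $\hat a(t)$ is just as correlated with $\{k(t)=k\}$ as $a(t)$ is. For general policies the inequality is plainly false---a policy that pulls arm $k$ exactly when $a(t)>M$ gives $\E[a(t)\mid k(t)=k]>M$---and the AoI-agnosticism of UCB does not save you, because the very correlation you identify (a failure run inflates $a(t)$ and simultaneously depresses $\hat\mu_{k^*}$) is real and nothing in your outline bounds it. As a minor aside, the boundary term also does not produce exactly $\frac{1-\mu^*}{\mu^*\mu_{\text{min}}}$: with $\E[a(1)]\le 1/\mu_{\text{min}}$ and $\E[a(T{+}1)]\ge 1$ you only get $\frac{1-\mu_{\text{min}}}{\mu^*\mu_{\text{min}}}$.

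The paper sidesteps decoupling altogether. Given any schedule $\mathbf{K}(T)$ with $N$ suboptimal pulls, it first replaces every suboptimal pull by the worst channel, and then uses an interchange argument (Lemma~\ref{lemma:twoCases}, iterated in Lemma~\ref{lemma:systemA}) to slide all worst-channel uses to the front of the horizon, showing each swap can only increase the cumulative expected AoI. The resulting ``worst-first'' schedule depends on $\mathbf{K}(T)$ only through $N$ and is explicit enough to compute directly, yielding $\sum_t \E[a(t)] \le T/\mu^* + \frac{1-\mu^*}{\mu^*\mu_{\text{min}}} + \big(\frac{1}{\mu_{\text{min}}}-\frac{1}{\mu^*}\big)\E[N]$ (Lemma~\ref{lemma:expectedAgeBound2}); plugging in the UCB pull-count bound then finishes. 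This rearrangement idea is the missing ingredient that replaces your decoupling step.
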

 We thus conclude that AoI regret of UCB scales as $\OO(K\log T)$, thus making it order optimal. The proof of Theorem \ref{theo:UB_AoI_regret_UCB_impro} upper bounds AoI regret as a function of the expected number of times sub-optimal channels are scheduled. The result then follows using a known upper bound on this quantity for UCB  \cite{auer2002finite} .}

\begin{algorithm}[h]
	\DontPrintSemicolon 
	\textbf{Initialise:} Set $\hat{\mu}_k=0$ to be the estimated success probability of Channel $k$, {$T_k(0)=0$ $\forall$ $k\in[K]$.}\;
	\While{$t\geq 1$}{
			$\alpha_{k}(t)=\hat{\mu}_k(t) T_k(t-1)+1$,\;
			$\beta_{k}(t)=(1-\hat{\mu}_k(t)) T_k(t-1)+1$,\;
			For each $k\in[K]$, pick a sample $\hat{\theta}_k(t)$ where $\hat{\theta}_k(t)\sim \text{Beta}(\alpha_{k}(t),\beta_{k}(t))$\\
	Schedule update on Channel $k(t)=\arg \max_{k\in [K]}\hat{\theta}_k(t)$\\ 
		Receive reward $X_{k(t)}(t)\sim \text{Ber}(\mu_{k(t)})$\;
		$\hat{\mu}_{k(t)}=(\hat{\mu}_{k(t)}\cdot T_{k(t)}(t-1)+X_{k(t)}(t))/(T_{k(t)}(t-1)+1)$\;
		$T_{k(t)}(t)=T_{k(t)}(t-1)+1$\;
		$t=t+1$}
	\caption{{\sc Thompson Sampling }(TS)}
	\label{algo:Ths}
\end{algorithm}


{\color{black}
	\begin{theorem}\label{theo:UB_AoI_regret_TS_impro}(Performance of TS)
	Consider any problem instance $\muB$ such that $\displaystyle k^* = \arg\max_{k=1:K} \mu_k$, $\displaystyle \mu_{\text{min}} = \min_{i=1:K} \mu_i>0$, $\displaystyle \mu^* = \max_{i=1:K} \mu_i$, and $\displaystyle \Delta = \mu^*-\max_{k\neq k^*}\mu_{k}$. Then, under Assumption 1, 
	$$R_{\text{TS}}(T)\leq\left\{\begin{array}{lc}
	\frac{1-\mu^*}{\mu^*\mu_{\text{min}}} + \left(\frac{1}{\mu_{\text{min}}}-\frac{1}{\mu^{*}}\right)O(K\log T), & \text{for }T>
	K\\
	\left(\frac{1}{\mu_{\text{min}}}-\frac{1}{\mu^*}\right)T, & \text{for }T \leq K.
	\end{array}\right.$$
\end{theorem}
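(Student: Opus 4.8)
The plan is to piggyback on the reduction already carried out for Theorem~\ref{theo:UB_AoI_regret_UCB_impro}, which converts the problem of bounding AoI regret into the problem of bounding the expected number of times suboptimal channels are scheduled, and then to feed in the classical finite-time analysis of Thompson Sampling. Concretely, I would first isolate the following policy-agnostic reduction, which is what the UCB proof really establishes:
\[
R_{\calP}(T)\;\le\;\frac{1-\mu^*}{\mu^*\mu_{\text{min}}}+\left(\frac{1}{\mu_{\text{min}}}-\frac{1}{\mu^*}\right)\sum_{k\neq k^*}\E[T_k(T)].
\]
Nothing in this inequality uses a structural property of UCB: it only relies on the facts that every scheduled channel has success probability at least $\mu_{\text{min}}$ and that, conditioned on the channel chosen in a slot, success is an independent $\text{Ber}(\mu_{k(t)})$ event. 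Both hold verbatim for TS, so the inequality applies unchanged.

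The heart of the reduction, and the step I expect to be the main obstacle, is controlling the temporal correlation of the AoI process: a single failed transmission inflates the AoI of all subsequent slots until the next success, so a suboptimal pull has a cascading cost rather than a one-slot cost. The clean way to handle this is to write $a_{\calP}(t)=1+\sum_{j\ge 1}\Indi\{\text{slots }t-1,\dots,t-j\text{ all fail}\}$, take expectations, and observe that $\PP(\text{slots }t-1,\dots,t-j\text{ all fail})=\E\big[\prod_{i=t-j}^{t-1}(1-\mu_{k(i)})\big]$. Comparing this against the genie's geometric failure-run tails $(1-\mu^*)^j$ and using $1-\mu_{k(i)}\le 1-\mu_{\text{min}}$ lets one attribute the entire excess of $\E[a_{\calP}(t)]$ over the genie's steady-state value $1/\mu^*$ to slots in which a suboptimal channel was used; summing over $t$ and collecting terms produces the per-pull penalty $\tfrac{1}{\mu_{\text{min}}}-\tfrac{1}{\mu^*}$ (the steady-state AoI gap $1/\mu_k-1/\mu^*$) together with the boundary constant $\tfrac{1-\mu^*}{\mu^*\mu_{\text{min}}}$. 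Since this cascading argument is exactly the content of the proof of Theorem~\ref{theo:UB_AoI_regret_UCB_impro}, for TS I would simply invoke it rather than redo it.

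With the reduction in hand, the remaining step is to bound $\sum_{k\neq k^*}\E[T_k(T)]$ for Thompson Sampling. Because each channel delivers i.i.d.\ $\text{Ber}(\mu_k)$ rewards and TS maintains the conjugate $\text{Beta}$ posterior, this is precisely the quantity controlled by the standard finite-time TS analysis for Bernoulli bandits, which gives a per-arm bound $\E[T_k(T)]=O(\log T/\Delta^2)$ and hence $\sum_{k\neq k^*}\E[T_k(T)]=O(K\log T)$. Substituting into the reduction yields the claimed bound for $T>K$. For the regime $T\le K$ I would use the crude per-slot estimate: by Assumption~\ref{assumption:initialConditions} the genie sits at steady-state AoI $1/\mu^*$, while any channel the policy schedules has success probability at least $\mu_{\text{min}}$, so $\E[a_{\calP}(t)-a^*(t)]\le \tfrac{1}{\mu_{\text{min}}}-\tfrac{1}{\mu^*}$ for every $t$; summing over the $T$ slots gives the linear bound $\big(\tfrac{1}{\mu_{\text{min}}}-\tfrac{1}{\mu^*}\big)T$.
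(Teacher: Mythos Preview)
Your proposal is correct and matches the paper's approach: the paper's proof of Theorem~\ref{theo:UB_AoI_regret_TS_impro} literally invokes the policy-agnostic reduction established for Theorem~\ref{theo:UB_AoI_regret_UCB_impro} (namely Lemma~\ref{lemma:expectedAgeBound2}, which gives $\sum_t\E[a(t)]\le T/\mu^*+\tfrac{1-\mu^*}{\mu^*\mu_{\text{min}}}+(\tfrac{1}{\mu_{\text{min}}}-\tfrac{1}{\mu^*})\E[N(\bfK)]$) and then substitutes the known $\E_{\text{TS}}[N(\bfK)]=O(K\log T)$ bound from Lemma~\ref{lemma:UCBbound}. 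One small clarification: the paper's actual mechanism for the reduction is not the direct tail-probability comparison you sketch, but a rearrangement argument (Lemmas~\ref{lemma:twoCases}--\ref{lemma:systemA}) that first replaces every suboptimal pull by the worst channel and then shows the cumulative AoI is maximized when all worst-channel pulls are moved to the front; since you plan to invoke rather than rederive the reduction, this distinction does not affect the validity of your plan.
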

We thus conclude that AoI regret of TS scales as $\OO(K\log T)$, thus making it order optimal. The proof follows on the same lines as that of Theorem \ref{theo:UB_AoI_regret_UCB_impro}.}

\begin{algorithm}[h]
	\DontPrintSemicolon 
	\textbf{Initialise:} Set $\hat{\mu}_k=0$ to be the estimated success probability of Channel $k$, $T_k(0)=0$ $\forall$ $k\in[K]$.\;
	\While{$t\geq 1$}{let $E(t)\sim\text{Ber}\left(\min\left\{ 1,3K\frac{\log^2 t}{t}\right\}\right)$\;
		
		\uIf {$E(t)=1$}{
			\textit{Explore:} Schedule update on a channel chosen uniformly at random\;
		}
		\Else{
			\textit{Exploit:} Schedule update on Channel $k(t)=\arg \max_{k\in [K]}\hat{\mu}_k(t)+\sqrt{\frac{\log^2 t}{2T_k(t-1)}}$
		}
		Receive reward $X_{k(t)}(t)\sim \text{Ber}(\mu_{k(t)})$\;
		$\hat{\mu}_{k(t)}=(\hat{\mu}_{k(t)}\cdot T_{k(t)}(t-1)+X_{k(t)}(t))/(T_{k(t)}(t-1)+1)$\;
		$T_{k(t)}(t)=T_{k(t)}(t-1)+1$\;
		$t=t+1$}
	\caption{{\sc Q-Upper Confidence Bound (Q-UCB)}}
	\label{algo:Q_UCB}
\end{algorithm}

\begin{theorem}\label{theo:UB_AoI_regret_Q_UCB}(Performance of Q-UCB)
	Consider any problem instance $\muB$ such that $\displaystyle \mu_{\text{min}} = \min_{i=1:K} \mu_i>0$, $\displaystyle \mu^* = \max_{i=1:K} \mu_i$, and  $c = \frac{-1}{\log(1-\mu^*)}$.  Under Assumption \ref{assumption:initialConditions}, there exists a constant $t_0$ such that
	\begin{align*}
	R_{\text{Q-UCB}}(T)\leq 
	\begin{cases}
	\frac{c \log T + 1 + cK \log^4 T + \OO\left(\frac{K}{T^2}\right)}{\mu_{\text{min}}}
	& \text{for }T>
	t_0\\
	\left(\frac{1}{\mu_{\text{min}}}-\frac{1}{\mu^*}\right)T, & \text{for }T\leq t_0.
	\end{cases}
	\end{align*}
\end{theorem}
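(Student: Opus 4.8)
The plan is to convert AoI regret into a weighted count of sub-optimal channel uses and then import the known control on that count for Q-UCB. Throughout, write $q:=1-\mu^*$, let $k(s)$ be the channel played in slot $s$, and set $p_s:=\PP(k(s)\neq k^*)$. The starting point is the identity $\E[a_{\calP}(t)]=\sum_{j\ge 0}\PP(a_{\calP}(t)>j)$ together with the observation that $\{a_{\calP}(t)>j\}$ is exactly the event that the updates in slots $t-1,\dots,t-j$ all fail. Peeling these slots off one at a time with the tower property (each conditional failure probability being $1-\mu_{k(s)}\le 1-\mu_{\text{min}}$) gives the crude bound $\PP(a_{\calP}(t)>j)\le(1-\mu_{\text{min}})^{j}$, hence $\E[a_{\calP}(t)]\le 1/\mu_{\text{min}}$; since Assumption~\ref{assumption:initialConditions} places the genie in steady state so that $\E[a^*(t)]=1/\mu^*$, this already proves the $T\le t_0$ branch $R_{\text{Q-UCB}}(T)\le(1/\mu_{\text{min}}-1/\mu^*)T$.

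For $T>t_0$ I would sharpen the estimate by keeping track of the most recent sub-optimal pull inside the window $\{t-j,\dots,t-1\}$. Decomposing $\{a_{\calP}(t)>j\}$ according to whether that window is sub-optimality-free or has its latest sub-optimal pull at slot $\ell$, and repeating the peeling argument but now retaining the exact factor $1-\mu^*$ on every optimal slot after $\ell$, yields the refined bound
\[
\PP(a_{\calP}(t)>j)\ \le\ q^{\,j}\ +\ \sum_{\ell=t-j}^{t-1} q^{\,t-1-\ell}\,p_\ell .
\]
I would then split $\E[a_{\calP}(t)]=\sum_{j\ge0}\PP(a_{\calP}(t)>j)$ at the genie scale $J:=c\log T$, chosen so that $q^{J}=1/T$. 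On the head $j\le J$ the geometric term sums to at most $1/\mu^*$, which cancels $\E[a^*(t)]$ exactly, while interchanging the order of summation shows the sub-optimal part contributes, summed over $t$, at most $\big(\sum_{d=1}^{J}(J-d+1)q^{\,d-1}\big)\sum_s p_s\le \frac{J}{\mu^*}\sum_s p_s=\frac{c\log T}{\mu^*}\sum_s p_s$.

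The tail $\sum_{j>J}\PP(a_{\calP}(t)>j)=\E[(a_{\calP}(t)-J)^+]$ I would handle by peeling the overshoot crudely (each extra failure having probability at most $1-\mu_{\text{min}}$) to get $\PP(a_{\calP}(t)>J+i)\le(1-\mu_{\text{min}})^{i}\,\PP(a_{\calP}(t-i)>J)$, and then bounding $\PP(a_{\calP}(t')>J)\le q^{J}+\sum_{\ell}q^{\,t'-1-\ell}p_\ell$ by the refined inequality again. Summing over $i$ and $t$ telescopes the geometric weights and leaves a tail contribution of at most $\tfrac{1}{\mu_{\text{min}}}\big(T q^{J}+\tfrac{1}{\mu^*}\sum_s p_s\big)=\tfrac{1}{\mu_{\text{min}}}\big(1+\tfrac{1}{\mu^*}\sum_s p_s\big)$, which is of strictly lower order than the head. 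Collecting the pieces gives $R_{\text{Q-UCB}}(T)\le \frac{c\log T}{\mu^*}\sum_s p_s+\frac{1}{\mu_{\text{min}}}\big(1+\frac{1}{\mu^*}\sum_s p_s\big)$, so everything reduces to bounding $\sum_s p_s$. Here I would reuse the Q-UCB analysis of \cite{krishnasamy2016learning}: the forced exploration contributes $\sum_s \tfrac{K-1}{K}\min\{1,3K\log^2 s/s\}=\OO(K\log^3 T)$ because $\sum_s \log^2 s/s=\OO(\log^3 T)$, while the same forced exploration guarantees every channel is sampled $\Omega(\log^2 s)$ times by slot $s$, so a Hoeffding/Chernoff argument makes the exploitation indices concentrate and the exploitation-error probabilities summable to $\OO(1)$ (with an $\OO(K/T^2)$ residual). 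Multiplying by $c\log T/\mu^*$ turns the exploration mass into the $cK\log^4 T$ term and the exploitation mass into the $c\log T+1$ term, and bounding $1/\mu^*\le 1/\mu_{\text{min}}$ gives the stated expression.

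The main obstacle is the last step, namely the exploitation-error bound: showing that forced exploration drives the probability of exploiting a sub-optimal channel down fast enough that it sums to a constant (up to the $\OO(K/T^2)$ residual) is the only genuinely policy-specific ingredient and rests on the concentration machinery of \cite{krishnasamy2016learning}. The only other delicate point is bookkeeping: combining the head and (lower-order) tail estimates, folding the sub-$\log^4$ remainders into the leading constant, and absorbing the boundary contribution of the slots $s\le 0$ permitted by Assumption~\ref{assumption:initialConditions} into the additive constant, so as to match the precise form of the claimed bound.
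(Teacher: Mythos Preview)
Your argument is correct, but it takes a noticeably different route from the paper's. The paper does not work with the pointwise probabilities $p_s=\PP(k(s)\neq k^*)$ at all. Instead it proves a generic lemma (Lemma~\ref{lemma:expectedAgeBound_forQ}) via a single dichotomy on the event $E_t=\{k(\tau)=k^*\text{ for all }\tau\in[t-c\log T+1,t]\}$: on $E_t$ one has $\E[a(t)\mid E_t]\le 1/\mu^*+1/(\mu_{\text{min}}T)$ because the first $c\log T$ factors in $\prod_i(1-\mu_{k(t-i)})$ are exactly $q$ and $q^{c\log T}=1/T$; on $E_t^c$ one uses the crude bound $\E[a(t)\mid E_t^c]\le 1/\mu_{\text{min}}$. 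This immediately yields
\[
\sum_{t=1}^{T}\E[a(t)]\ \le\ \frac{T}{\mu^*}+\frac{c\log T+1}{\mu_{\text{min}}}+\frac{1}{\mu_{\text{min}}}\,\E\!\Bigl[\sum_{t}\mathbbm{1}_{E_t^c}\Bigr],
\]
and the whole Q-UCB--specific work is concentrated in a second lemma (Lemma~\ref{lemma:boundForQPolicies}) that bounds $\E[\sum_t\mathbbm{1}_{E_t^c}]\le cK\log^4 T+\OO(K/T^2)$ by splitting $E_t^c$ into ``some exploration in the window'' versus ``no exploration but a wrong exploit in the window'' and quoting \cite{krishnasamy2016learning} directly for each piece. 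Your refined inequality $\PP(a(t)>j)\le q^j+\sum_\ell q^{t-1-\ell}p_\ell$, the head/tail split at $J=c\log T$, and the reduction to $\sum_s p_s$ are all unnecessary for the paper's argument: the paper's dichotomy is coarser but shorter, and it matches the stated constants without any ``folding'' of lower-order terms. What your approach buys is a more self-contained derivation (you essentially re-prove the relevant part of \cite{krishnasamy2016learning} rather than citing it) and a slightly sharper intermediate constant ($1/\mu^*$ in place of $1/\mu_{\text{min}}$ on the leading term), at the cost of considerably more bookkeeping; note in particular that your $\OO(K/T^2)$ residual is not the same object as the paper's---theirs is the bound on $\E[\sum_t\mathbbm{1}_{E_t^{(2)}}]$ taken verbatim from \cite{krishnasamy2016learning}, not a tail of $\sum_s p_s$---so the exact form of the displayed bound does not fall out of your calculation without an extra alignment step.
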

\color{black}

We thus conclude that AoI regret of Q-UCB scales as $\OO(K\log^4 T)$. The proof of Theorem \ref{theo:UB_AoI_regret_Q_UCB} first characterizes the AoI regret as a function of the expected number of times a sub-optimal channel is scheduled under Q-UCB. The result then follows using results in \cite{krishnasamy2016learning}.

\begin{algorithm}[h]
	\DontPrintSemicolon 
	\textbf{Initialise:} Set $\hat{\mu}_k=0$ to be the estimated success probability of Channel $k$, $T_k(0)=0$ $\forall$ $k\in[K]$.\;
	\While{$t\geq 1$}{let $E(t)\sim\text{Ber}\left(\min\left\{ 1,3K\frac{\log^2 t}{t}\right\}\right)$\;
		
		\uIf {$E(t)=1$}{
			\textit{Explore:} Schedule update on a channel chosen uniformly at random
		}
		\Else{
			\textit{Exploit:}\;
			
				$\alpha_{k}(t)=\hat{\mu}_k(t) T_k(t-1)+1$,\;
				$\beta_{k}(t)=(1-\hat{\mu}_k(t)) T_k(t-1)+1$,\;
				For each $k\in[K]$, pick a sample $\hat{\theta}_k(t)$ where ,$\hat{\theta}_k(t)\sim \text{Beta}(\alpha_{k}(t),\beta_{k}(t))$\\
			Schedule update on Channel $k(t)=\arg \max_{k\in [K]}\hat{\theta}_k(t)$}
		Receive reward $X_{k(t)}(t)\sim \text{Ber}(\mu_{k(t)})$\;
		$\hat{\mu}_{k(t)}=(\hat{\mu}_{k(t)}\cdot T_{k(t)}(t-1)+X_{k(t)}(t))/(T_{k(t)}(t-1)+1)$\;
		$T_{k(t)}(t)=T_{k(t)}(t-1)+1$\;
		$t=t+1$}
	\caption{{\sc Q-Thompson Sampling }(Q-TS)}
	\label{algo:Q_Ths}
\end{algorithm}

\begin{theorem}\label{theo:UB_AoI_regret_Q_THS}(Performance of Q-TS)
	Consider any problem instance $\muB$ such that $\displaystyle \mu_{\text{min}} = \min_{i=1:K} \mu_i>0$, $\displaystyle \mu^* = \max_{i=1:K} \mu_i$, and  $c = \frac{-1}{\log(1-\mu^*)}$.  There exists a constant $t_0$ such that
	\begin{align*}
	R_{\text{Q-TS}}(T)\leq 
	\begin{cases}
	\frac{c \log T + 1 + cK \log^4 T + \OO\left(\frac{K}{T^2}\right)}{\mu_{\text{min}}}
	& \text{for }T>
	t_0\\
	\left(\frac{1}{\mu_{\text{min}}}-\frac{1}{\mu^*}\right)T, & \text{for }T\leq t_0.
	\end{cases}
	\end{align*}
\end{theorem}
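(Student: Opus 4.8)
The plan is to mirror the proof of Theorem~\ref{theo:UB_AoI_regret_Q_UCB}, since Q-TS and Q-UCB share the identical forced-exploration phase and differ only in the exploitation rule (Thompson sampling instead of a UCB index). The argument has two independent ingredients: (i) an AoI-specific regret decomposition that upper bounds $R_{\text{Q-TS}}(T)$ as a function of the expected number of times a sub-optimal channel is scheduled, and (ii) a bound on that expected number of sub-optimal schedulings for Q-TS, imported from \cite{krishnasamy2016learning}. Ingredient (i) is exactly the one established for Q-UCB and can be reused verbatim, because it never references the exploitation rule; only ingredient (ii) has to be re-invoked with the Thompson-sampling guarantee.

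For the decomposition I would first condition on the history $\mathcal{F}_{t-1}$ up to the start of slot $t$ and use the AoI recursion of Definition~\ref{def:AoI_regret} to write $\E[a_{\text{Q-TS}}(t)\mid\mathcal{F}_{t-1}] = 1 + (1-\mu_{k(t-1)})\,a_{\text{Q-TS}}(t-1)$, while the genie satisfies $\E[a^*(t)]=1/\mu^*$ in steady state under Assumption~\ref{assumption:initialConditions}. Writing $1-\mu_{k(t-1)} = (1-\mu^*) + (\mu^*-\mu_{k(t-1)})$ and unrolling this linear recursion expresses the per-slot excess $\E[a_{\text{Q-TS}}(t)]-1/\mu^*$ as a geometric sum with ratio $(1-\mu^*)$ of the ``excess-failure'' terms $\E[(\mu^*-\mu_{k(s)})\,a_{\text{Q-TS}}(s)]$, which vanish whenever the optimal channel is used. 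Swapping the order of summation, each slot's contribution is multiplied by a convergent geometric factor bounded by $1/\mu^*$, yielding $R_{\text{Q-TS}}(T)\le \frac{1}{\mu^*}\sum_{s}\E[(\mu^*-\mu_{k(s)})\,a_{\text{Q-TS}}(s)]$. Bounding $\mu^*-\mu_{k(s)}\le\mu^*\,\Indi\{k(s)\neq k^*\}$ and the expected AoI by $1/\mu_{\text{min}}$ reduces the leading term to $\frac{1}{\mu_{\text{min}}}\,\E\big[\sum_{s}\Indi\{k(s)\neq k^*\}\big]$, which is the quantity controlled in (ii). This cascading-effect step---charging the entire discounted future cost of one bad slot to that slot---is the genuinely AoI-specific part of the argument.

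The step I expect to be the main obstacle is making the replacement of $a_{\text{Q-TS}}(s)$ by its mean rigorous, because $k(s)$ and $a_{\text{Q-TS}}(s)$ are both functions of the shared reward history and are positively correlated (a run of failures simultaneously inflates the AoI and depresses $\hat\mu_{k^*}$, making a sub-optimal pull more likely). I would split the sub-optimal schedulings into forced-exploration pulls and exploitation pulls. For the former, the coin $E(s)$ and the uniform channel choice are independent of the reward history, so $\E[\Indi\{\cdot\}\,a_{\text{Q-TS}}(s)]$ factorizes cleanly and the AoI factor is bounded by $1/\mu_{\text{min}}$. For the latter, I would use that the AoI has geometric tails, $\PP(a_{\text{Q-TS}}(s)>j)\le(1-\mu_{\text{min}})^j$, so all its moments are $\BigO(\mathrm{poly}(1/\mu_{\text{min}}))$, and combine this via Cauchy--Schwarz (or a truncation at level $\approx c\log T$, where $c=-1/\log(1-\mu^*)$ is exactly the level at which the tail drops below $1/T$) with the fact that a sub-optimal exploitation pull is a rare event late in the horizon; this is precisely where the $c\log T$ and $\BigO(K/T^2)$ refinement terms in the statement originate.

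Finally, I would import the Thompson-sampling guarantee of \cite{krishnasamy2016learning}, which bounds the expected number of sub-optimal schedulings of Q-TS by $\BigO(K\log^4 T)$ (the forced exploration contributing the lower-order $\BigO(K\log^3 T)$ and the exploitation phase the dominant term), and substitute it into the decomposition to obtain the $\frac{cK\log^4 T}{\mu_{\text{min}}}$ leading term for $T>t_0$. For the complementary regime $T\le t_0$ no learning bound is needed: since $\E[a_{\text{Q-TS}}(t)]\le 1/\mu_{\text{min}}$ while $\E[a^*(t)]=1/\mu^*$, the per-slot regret is at most $\frac{1}{\mu_{\text{min}}}-\frac{1}{\mu^*}$, which gives the stated linear bound and completes the proof that $R_{\text{Q-TS}}(T)=\BigO(K\log^4 T)$.
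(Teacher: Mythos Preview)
Your high-level plan---reuse the policy-agnostic AoI decomposition from the Q-UCB proof and swap in the Thompson-sampling guarantee from \cite{krishnasamy2016learning}---is exactly what the paper does. But the decomposition you sketch is \emph{not} the one the paper uses, and your version has a concrete gap.

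The paper does not unroll the AoI recursion. Lemma~\ref{lemma:expectedAgeBound_forQ} works directly with the product representation $\PP(a(t)>\tau)=\E\bigl[\prod_{i=0}^{\tau}(1-\mu_{k(t-i)})\bigr]$ and conditions on the event $E_t=\{k(\tau)=k^*\text{ for all }t-c\log T<\tau\le t\}$. On $E_t$ the first $c\log T$ factors in the product equal $(1-\mu^*)$, so the tail beyond $\tau=c\log T$ contributes at most $(1-\mu^*)^{c\log T}/\mu_{\text{min}}=1/(\mu_{\text{min}}T)$; on $E_t^c$ one simply uses the crude bound $1/\mu_{\text{min}}$. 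This gives
\[
\sum_{t=1}^{T}\E[a(t)]\le \frac{T}{\mu^*}+\frac{c\log T+1}{\mu_{\text{min}}}+\frac{1}{\mu_{\text{min}}}\,\E\Bigl[\sum_{t}\Indi_{E_t^c}\Bigr],
\]
and no correlation issue ever arises because one never has to factor an expectation of the form $\E[\Indi\{k(s)\neq k^*\}\,a(s)]$.

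Your truncation fix for that correlation fails as written: the unconditional tail you invoke is $\PP(a(s)>j)\le(1-\mu_{\text{min}})^j$, and $(1-\mu_{\text{min}})^{c\log T}=T^{\log(1-\mu_{\text{min}})/\log(1-\mu^*)}=T^{-\gamma}$ with $\gamma<1$ whenever $\mu_{\text{min}}<\mu^*$. Summed over $s\le T$ the residual is of order $T^{1-\gamma}$, not bounded. The tail drops to $1/T$ at level $c\log T$ only when the last $c\log T$ channels are known to be the optimal one---i.e.\ precisely under $E_t$---so making your argument rigorous forces you back to the paper's decomposition anyway. You also have the source of the $cK\log^4 T$ term reversed: in Lemma~\ref{lemma:boundForQPolicies} it is the \emph{forced-exploration} windowed event $E_t^{(1)}$ that produces $cK\log^4 T$ (each of the $\BigO(K\log^3 T)$ forced explorations triggers $c\log T$ consecutive events $E_t^c$), while the sub-optimal-exploitation contribution $E_t^{(2)}$ is the $\BigO(K/T^2)$ term. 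What is imported from \cite{krishnasamy2016learning} is a bound on these windowed sums, not a direct bound on $\E[N(\bfK)]$.
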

\color{black}


We conclude that AoI regret of Q-Thompson Sampling scales as $\OO(K\log^4 T)$. The proof of Theorem \ref{theo:UB_AoI_regret_Q_THS} follows on the same lines as that of Theorem \ref{theo:UB_AoI_regret_Q_UCB}.
\color{black}

\subsection{Our AoI-aware Policies}

In this section, we propose AoI-aware variants of the policies discussed in the previous section. In the classical MAB with Bernoulli rewards, the contribution of a time-slot to the overall regret is upper bounded by one. Unlike the MAB, for AoI bandits, the difference between AoIs under a candidate policy and the {\color{black}genie} policy in a time-slot is unbounded. This motivates the need to take the current AoI value into account when making scheduling decisions. Intuitively, it makes sense to explore when AoI is low and exploit when AoI is high since the cost of making a mistake is much higher when AoI is high. We use this intuition to design AoI-aware policies. The key idea behind these policies is that they mimic the original policies when AoI is below a threshold and exploit when AoI is equal to or above a threshold, for an appropriately chosen threshold. 

The first two policies (Algorithms \ref{algo:ths_Thr} and \ref{algo:UCB_Thr}) are variants of Thompson Sampling and UCB  respectively. These policies maintain an estimate of the success probability of the best arm, denoted by $\hat{\mu}^*$. When AoI is not more than $\frac{1}{\hat{\mu}^*}$, the two policies mimic UCB and Thompson Sampling respectively, and exploit the ``best" arm (based on past observations) otherwise. Due to space constraints, Algorithm \ref{algo:UCB_Thr} is formally defined in the appendix. The third and fourth policies are variants of Q-UCB and Q-Thompson Sampling. When AoI is one, the two policies mimic Q-UCB and Q-Thompson Sampling respectively and exploit the ``best" arm (based on past observations) otherwise. These {\color{black}policies} are formally defined in the appendix (Algorithms \ref{algo:Q_UCB_Thr} and \ref{algo:Q_Ths_Thr}). 

In the next section, we compare the performance of all eight policies via simulations.


\begin{algorithm}[h]
	
		\DontPrintSemicolon 
		\textbf{Initialise:} Set $\hat{\mu}_k=0$ to be the estimated success probability of Channel $k$, $T_k(0)=0$ $\forall$ $k\in[K]$.\;
		\While{$t\geq 1$}{
			$\alpha_{k}(t)=\hat{\mu}_k(t) T_k(t-1)+1$,\;
			$\beta_{k}(t)=(1-\hat{\mu}_k(t)) T_k(t-1)+1$,\;
			Let $\text{limit(t)}=\min\limits_{k\in [K]}\frac{\alpha_{k}(t)+\beta_{k}(t)}{\alpha_{k}(t)}$\;
			\uIf {$a(t-1)>\text{limit(t)}$}{
				\textit{Exploit:} Select channel with highest estimated success probability
			}
			\Else{
				\textit{Explore:}\;
				For each $k\in[K]$, pick a sample $\hat{\theta}_k(t)$, where $\hat{\theta}_k(t)\sim \text{Beta}(\alpha_{k}(t),\beta_{k}(t))$ \\
				Schedule update on Channel $k(t)=\arg \max_{k\in [K]}\hat{\theta}_k(t)$}
			Receive reward $X_{k(t)}(t)\sim \text{Ber}(\mu_{k(t)})$\;
			$\hat{\mu}_{k(t)}=(\hat{\mu}_{k(t)}\cdot T_{k(t)}(t-1)+X_{k(t)}(t))/(T_{k(t)}(t-1)+1)$\;
			$T_{k(t)}(t)=T_{k(t)}(t-1)+1$\;
			$t=t+1$}
		\caption{{\sc AoI-Aware Thompson Sampling }(AA-TS)}
		\label{algo:ths_Thr}
\end{algorithm}
\section{Simulations}
\begin{figure}[h]
	\subfloat[AoI regret as a function of time for Setting 1.a]{
		\begin{minipage}[c][1\width]{
				0.5\textwidth}
			\centering
			\includegraphics[width=1\textwidth]{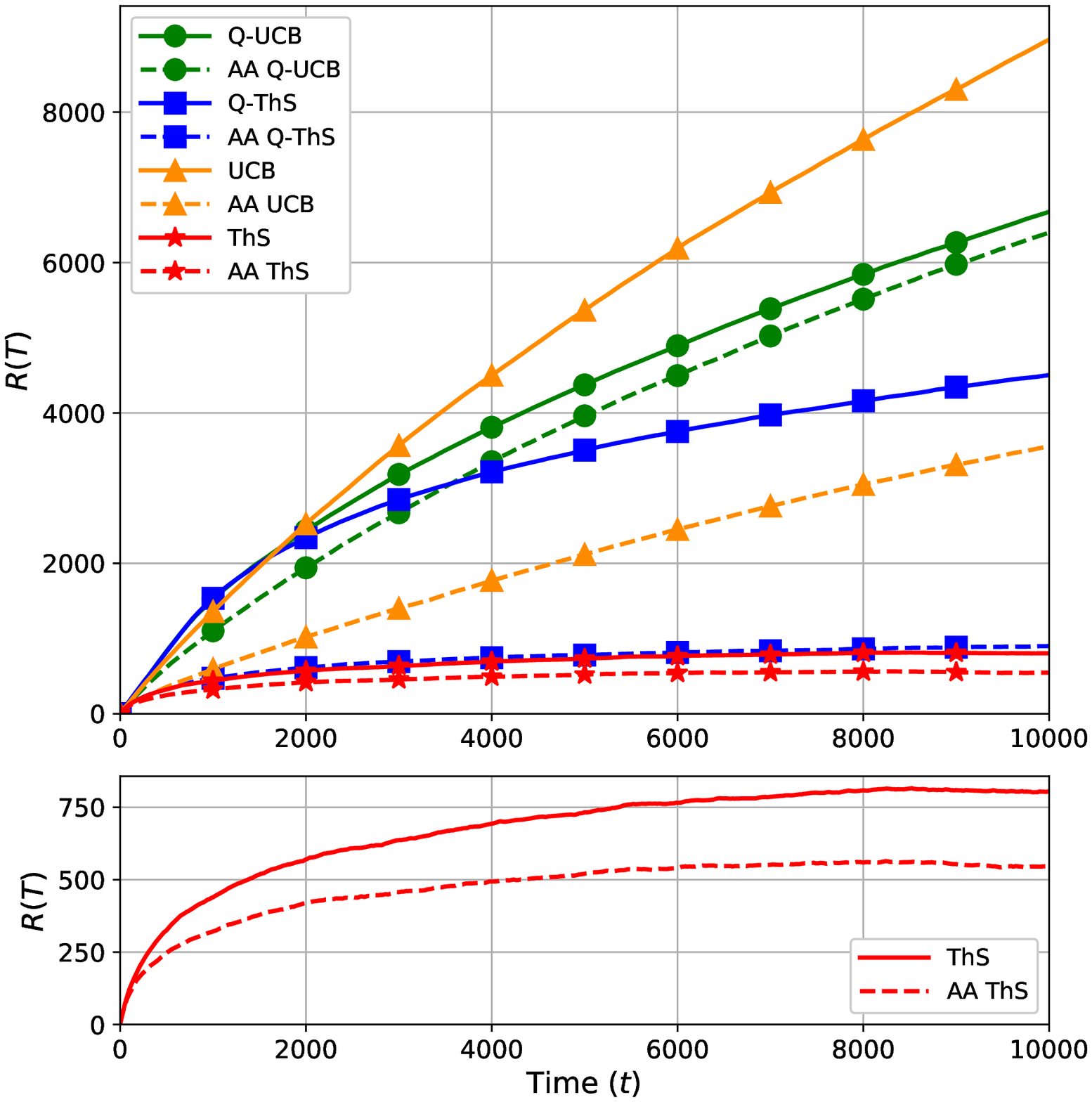}
	\end{minipage}}
	\hfill 	
	\subfloat[AoI regret as a function of time for Setting 1.b]{
		\begin{minipage}[c][1\width]{
				0.5\textwidth}
			\centering
			\includegraphics[width=1\textwidth]{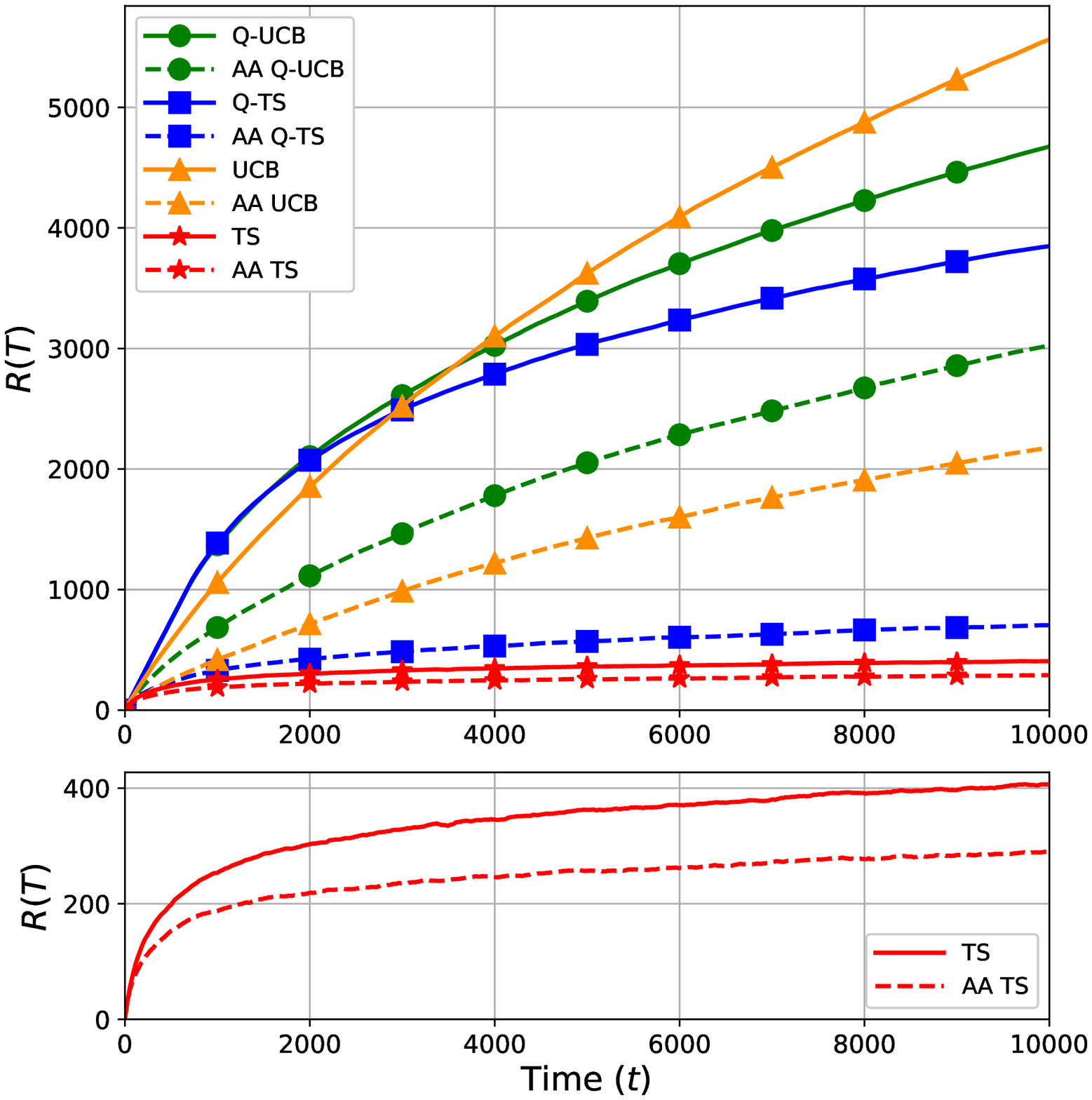}
	\end{minipage}}
	%
	\newline 
	\subfloat[AoI regret as a function of time for Setting 1.c]{
		\begin{minipage}[c][1\width]{
				0.5\textwidth}
			\centering
			\includegraphics[width=1\textwidth]{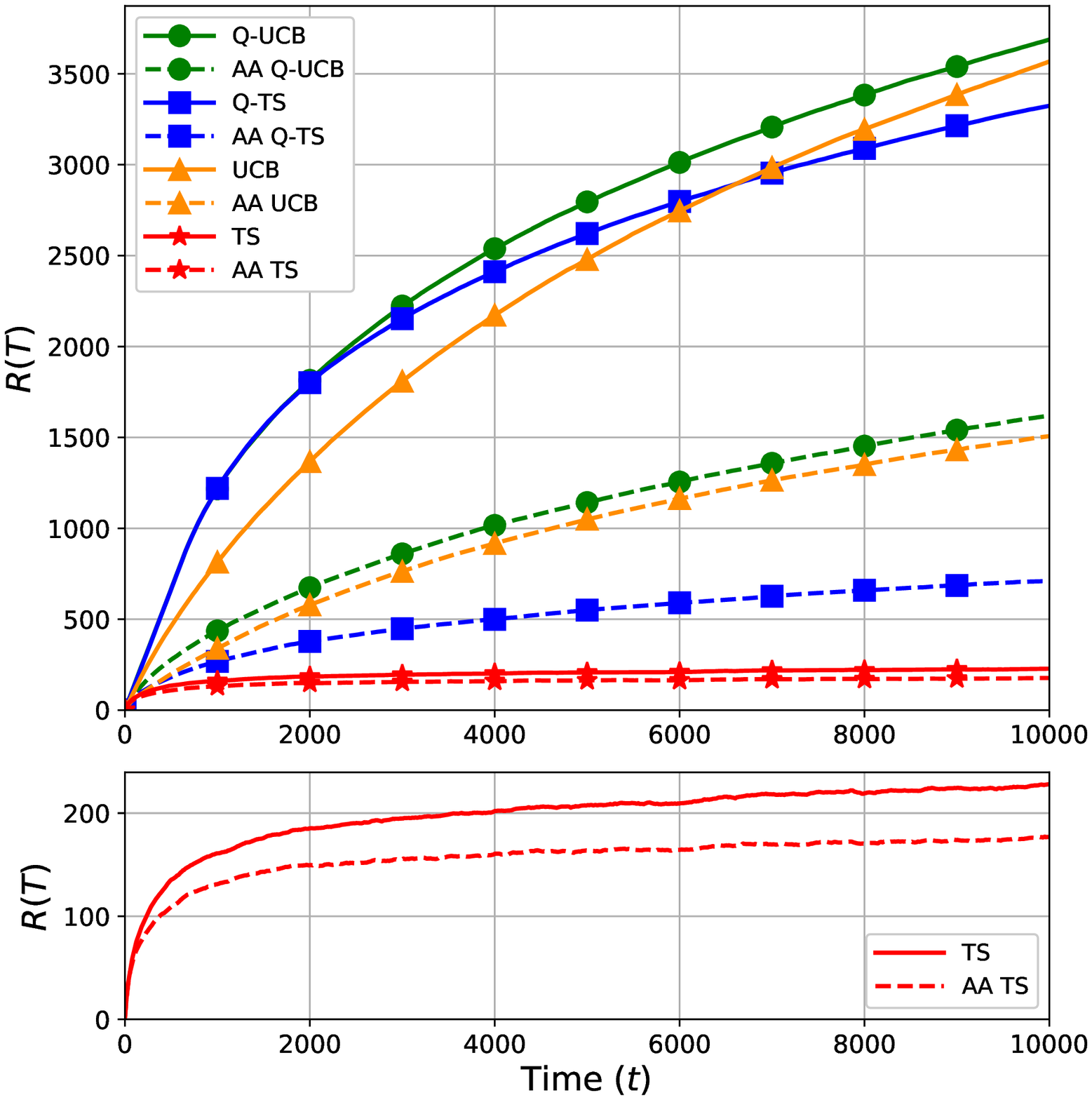}
	\end{minipage}}
	\hfill 	
	\subfloat[AoI regret as a function of time for Setting 1.d]{
		\begin{minipage}[c][1\width]{
				0.5\textwidth}
			\centering
			\includegraphics[width=1\textwidth]{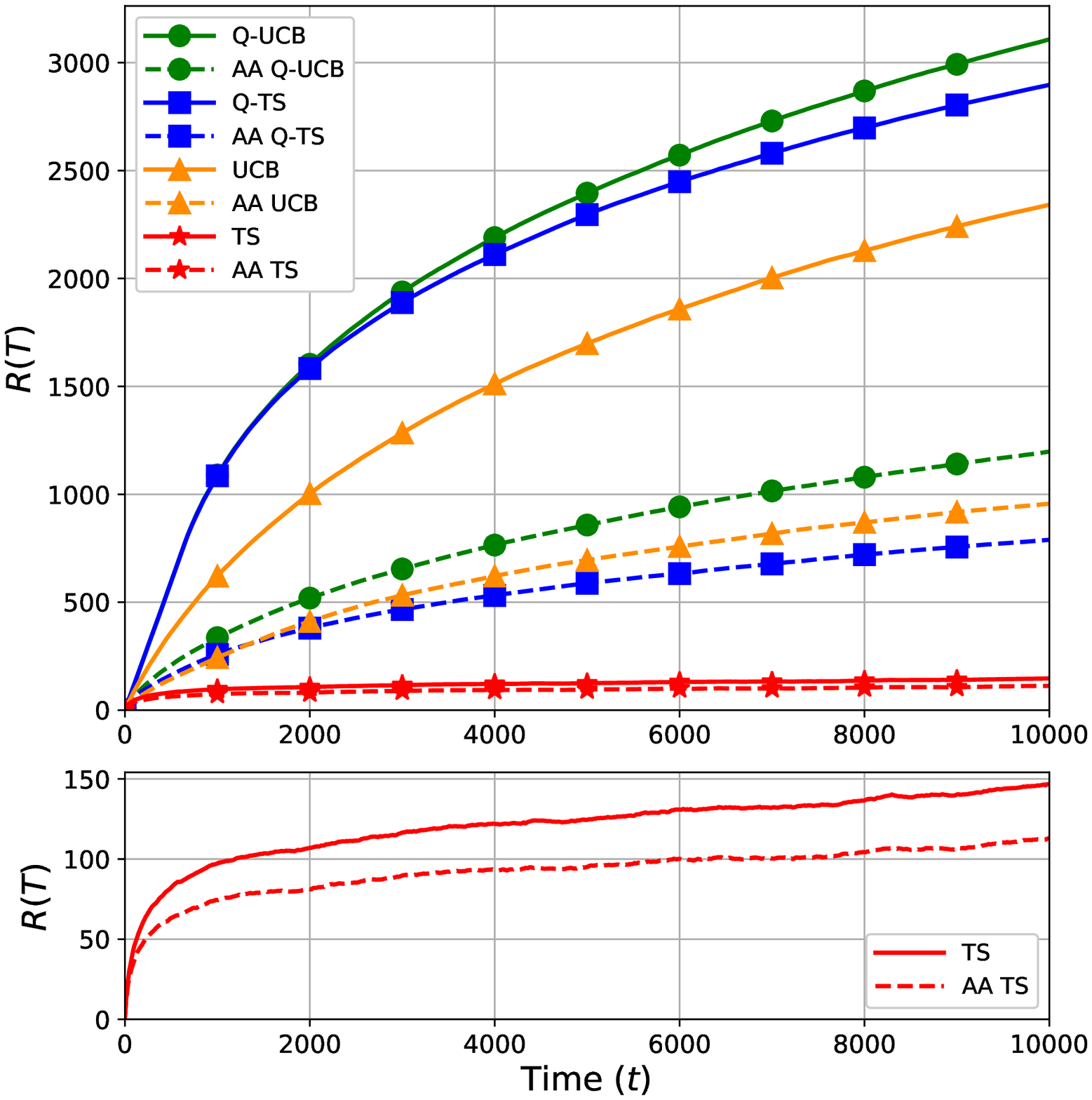}
	\end{minipage}}
	\caption{}
	\label{fig:settiing1}
\end{figure}

\begin{figure}[ht]
	\subfloat[AoI regret as a function of time for Setting 1.e]{
		\begin{minipage}[c][1\width]{
				0.5\textwidth}
			\centering
			\includegraphics[width=1\textwidth]{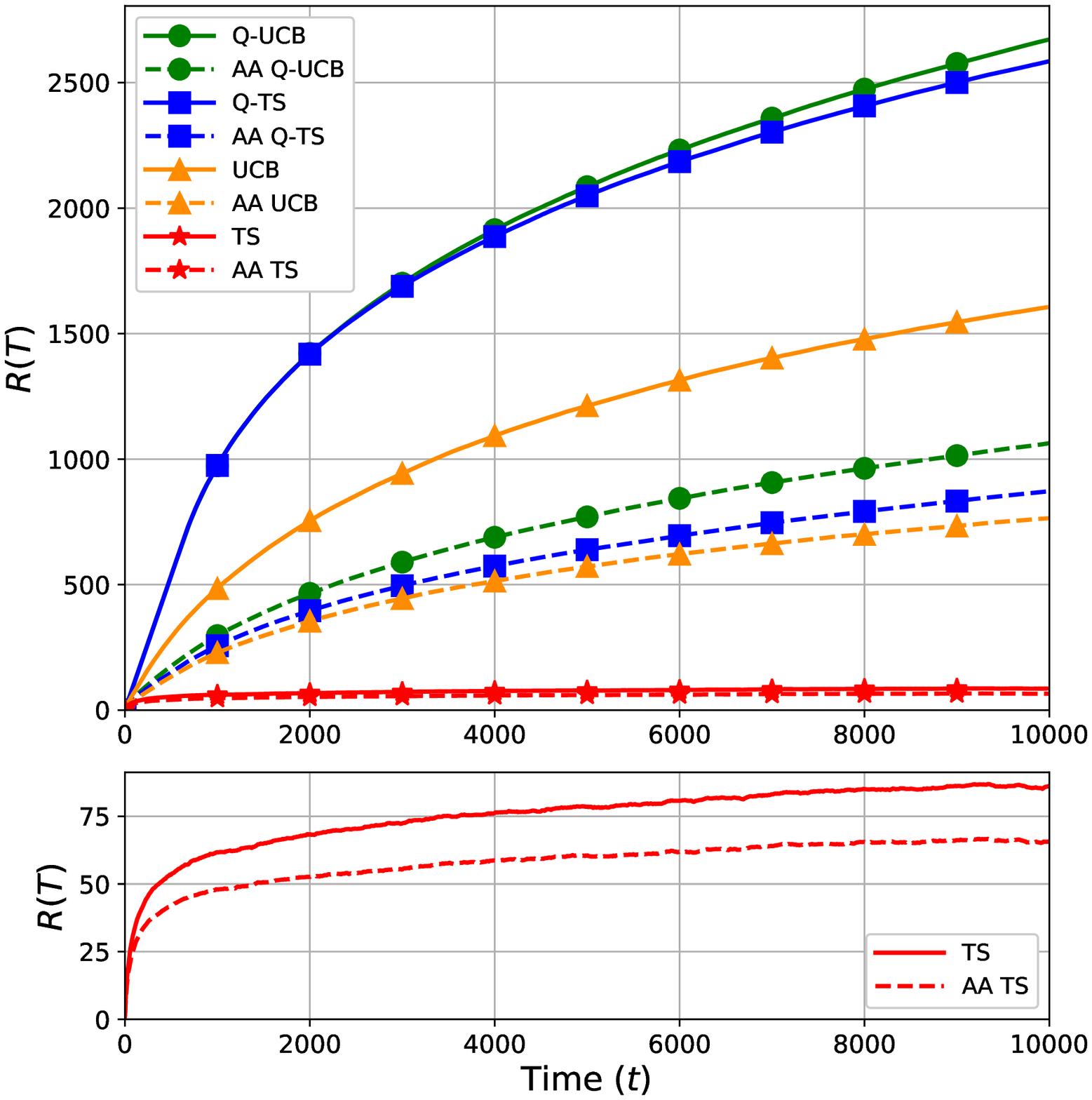}
	\end{minipage}}
	\hfill 	
	\subfloat[AoI regret as a function of time for Setting 2.a]{
		\begin{minipage}[c][1\width]{
				0.5\textwidth}
			\centering
			\includegraphics[width=1\textwidth]{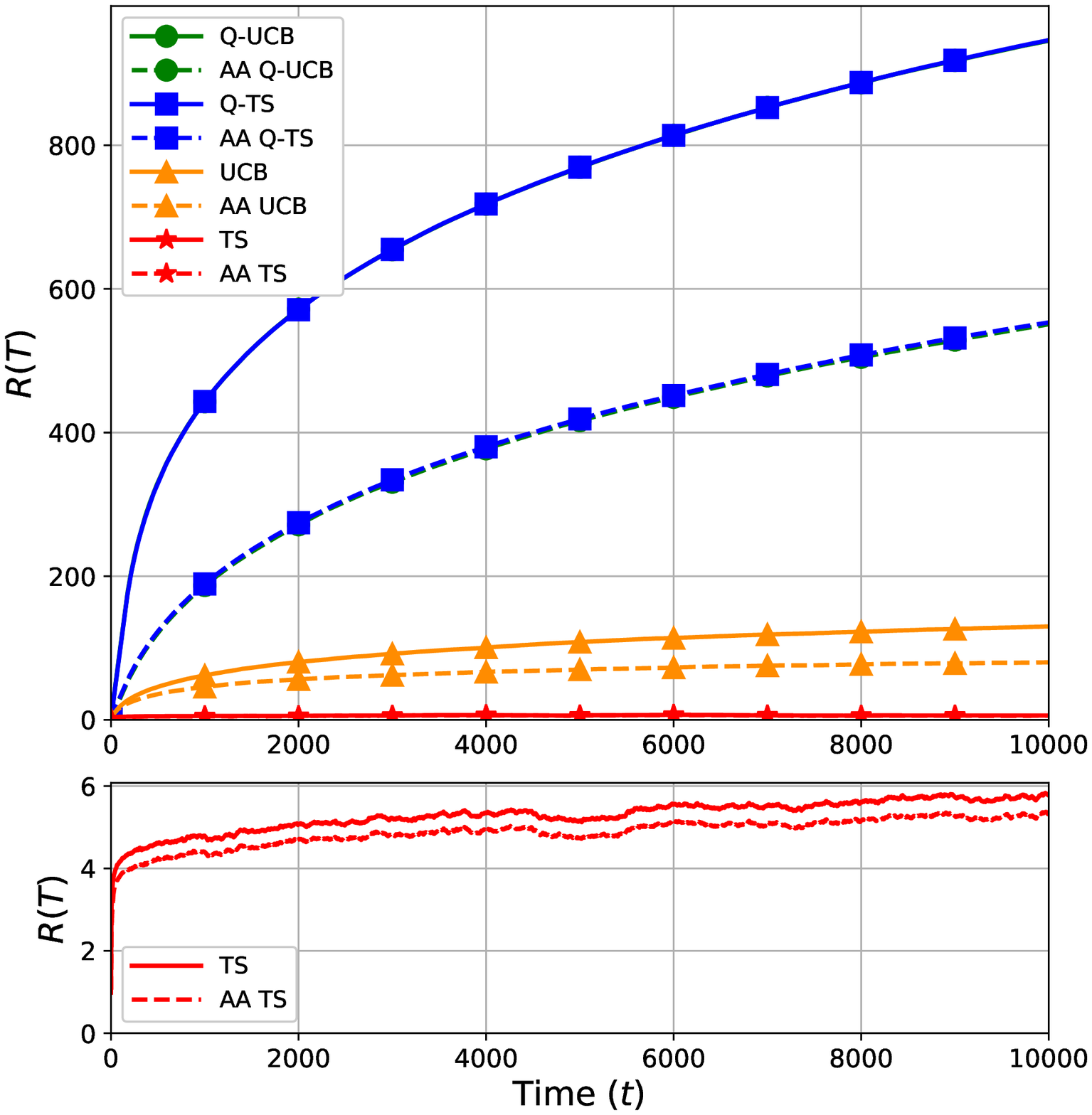}
	\end{minipage}}
	%
	\newline 
	\subfloat[AoI regret as a function of time for Setting 2.b]{
		\begin{minipage}[c][1\width]{
				0.5\textwidth}
			\centering
			\includegraphics[width=1\textwidth]{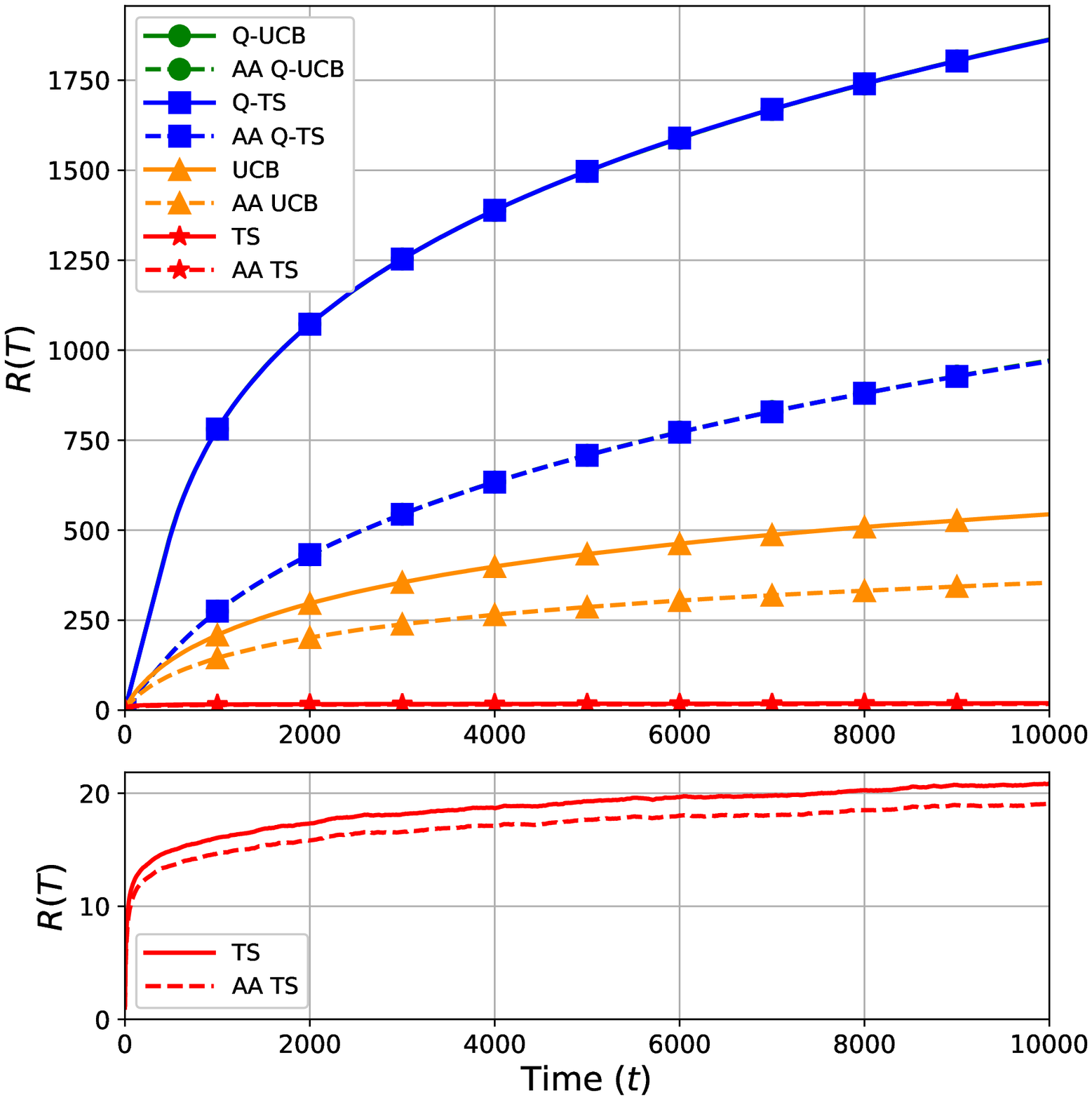}
	\end{minipage}}
	\hfill 	
	\subfloat[AoI regret as a function of time for Setting 2.c]{
		\begin{minipage}[c][1\width]{
				0.5\textwidth}
			\centering
			\includegraphics[width=1\textwidth]{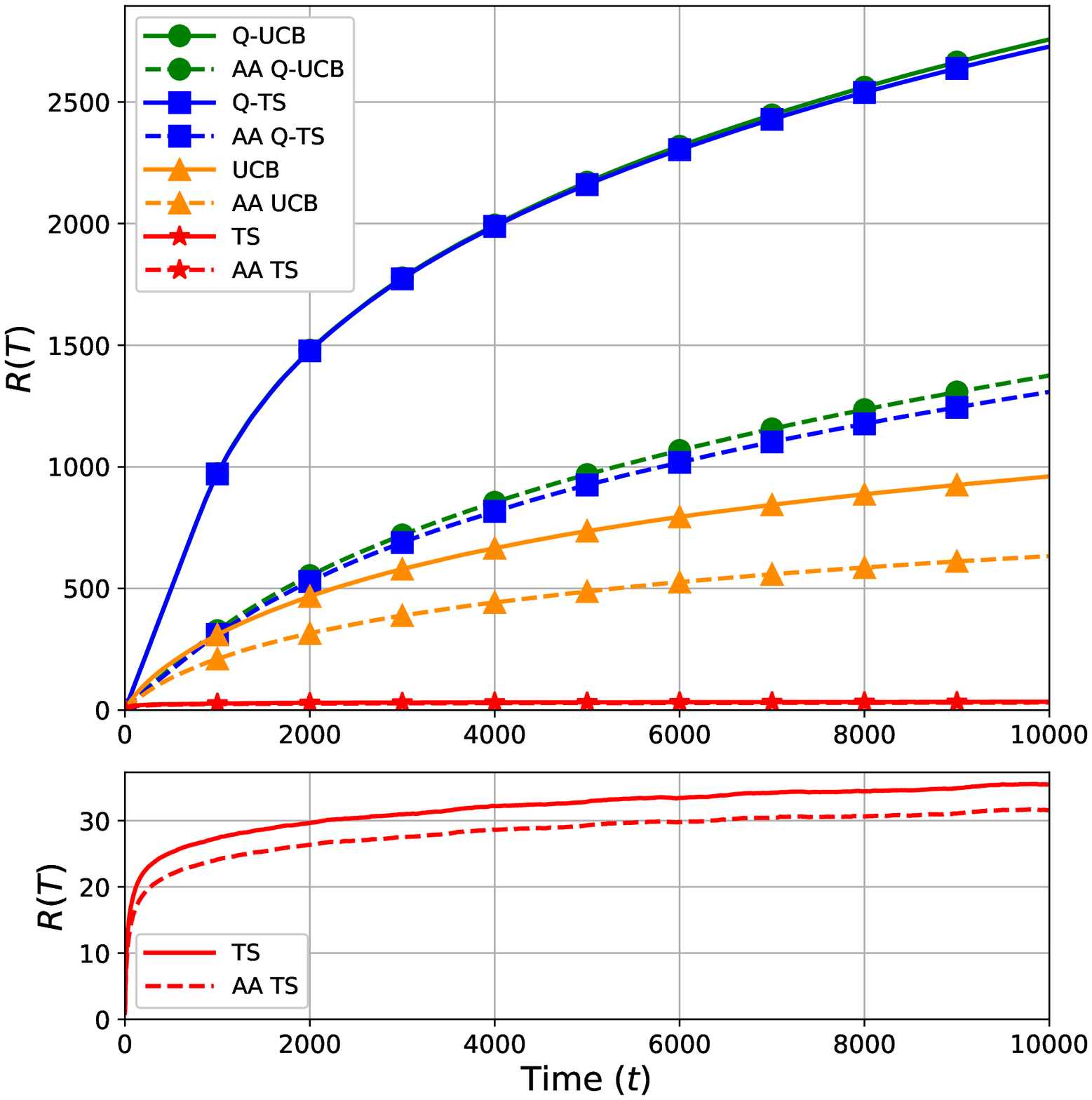}
	\end{minipage}}
	\caption{}
	\label{fig:settiing1and2}
\end{figure}

\begin{figure}[ht]
	\subfloat[AoI regret as a function of time for Setting 2.d]{
		\begin{minipage}[c][1\width]{
				0.5\textwidth}
			\centering
			\includegraphics[width=1\textwidth]{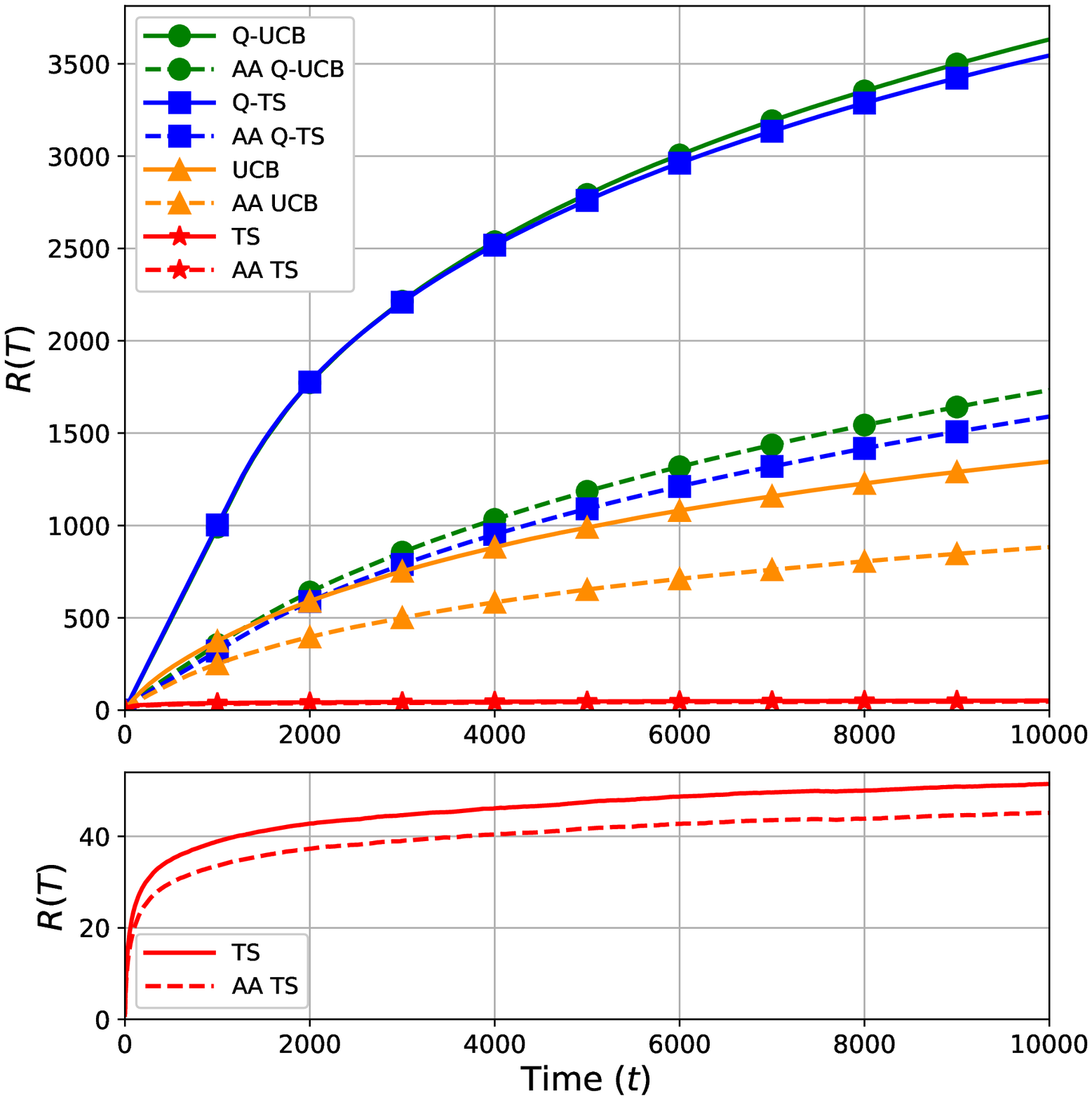}
	\end{minipage}}
	\hfill 	
	\subfloat[AoI regret as a function of time for Setting 2.e]{
		\begin{minipage}[c][1\width]{
				0.5\textwidth}
			\centering
			\includegraphics[width=1\textwidth]{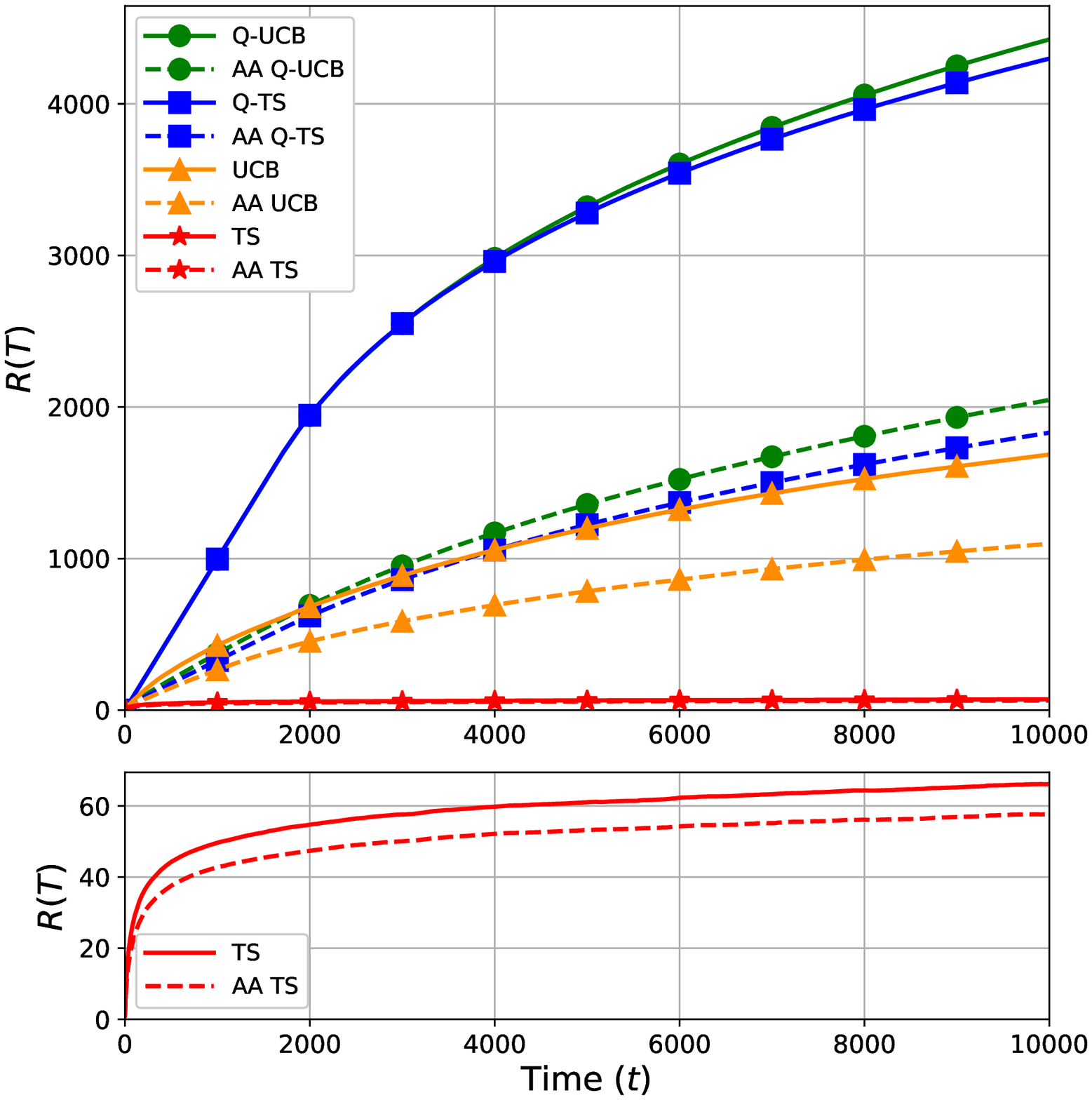}
	\end{minipage}}
	%
	\newline 
	\subfloat[AoI regret at $T=10000$ for Settings 1.a -- 1.e]{
		\begin{minipage}[c][1\width]{
				0.5\textwidth}
			\centering
			\includegraphics[width=1\textwidth]{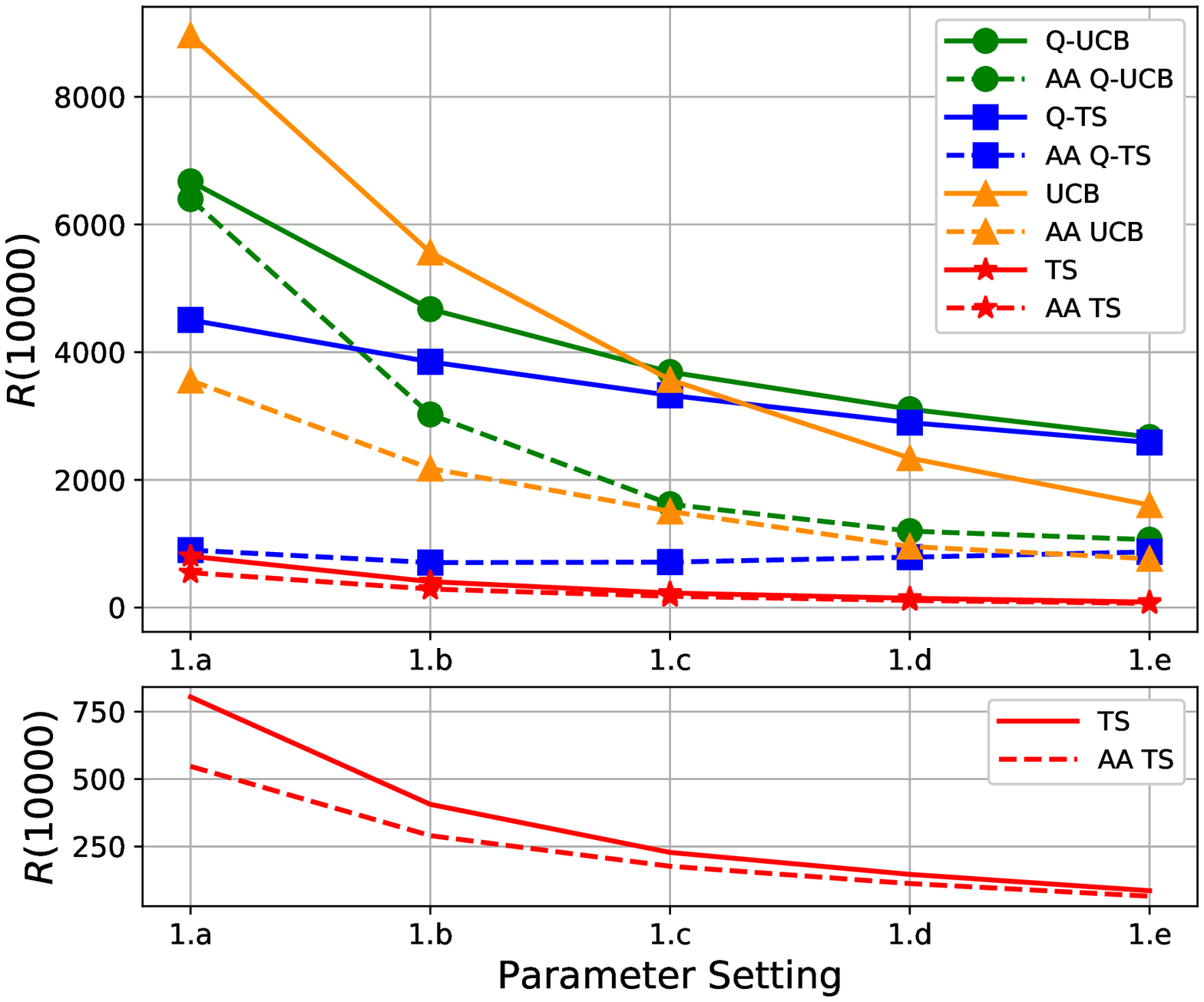}
	\end{minipage}}
	\hfill 	
	\subfloat[AoI regret at $T=10000$ for Settings 2.a -- 2.e]{
		\begin{minipage}[c][1\width]{
				0.5\textwidth}
			\centering
			\includegraphics[width=1\textwidth]{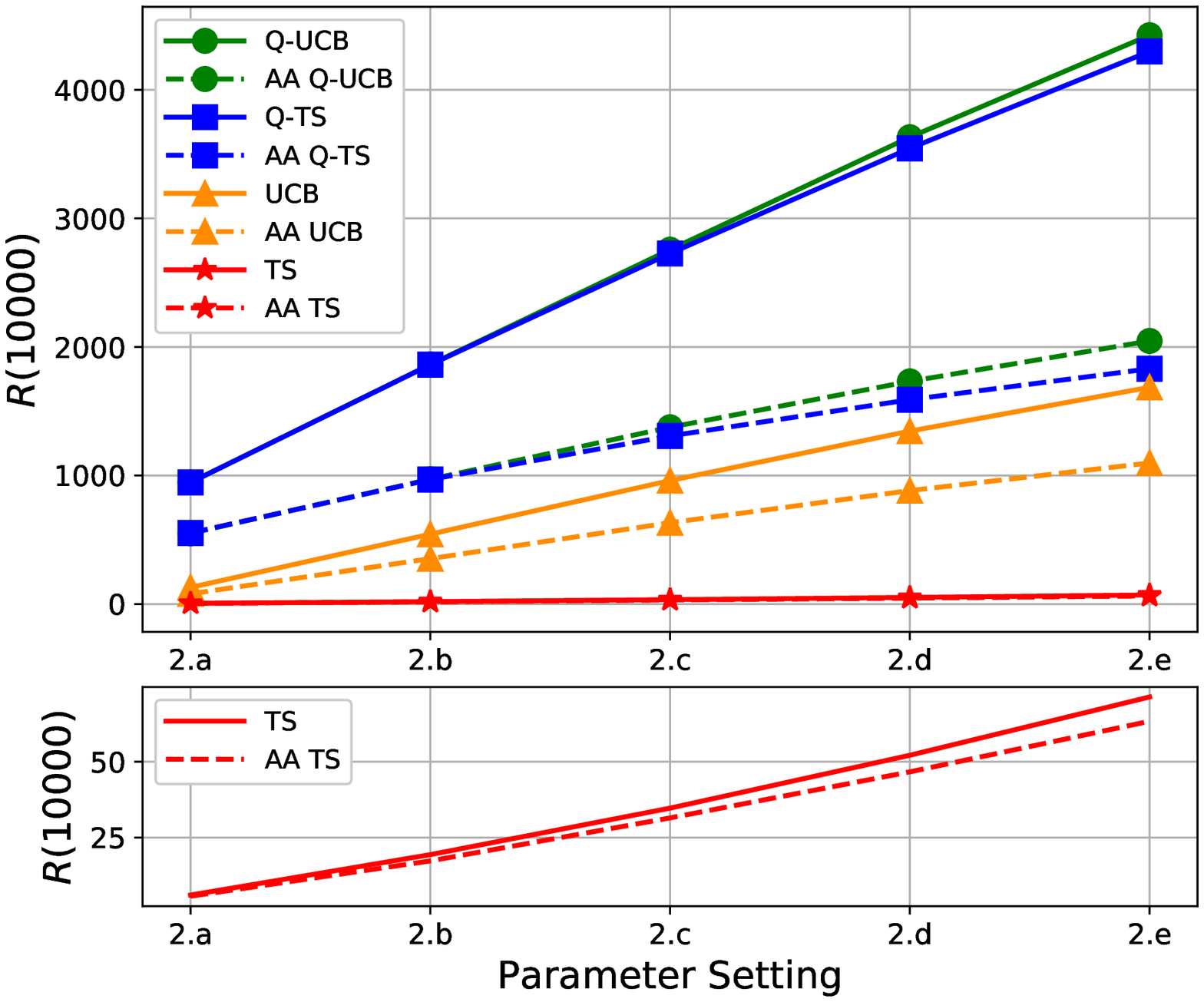}
	\end{minipage}}
	\caption{}
	\label{fig:settiing2andcombined}
\end{figure}

We present two sets of simulation results, each with five settings. In the first set, we fix the number of arms to five and vary the range of success probabilities of these five arms (Figures \ref{fig:settiing1} and \ref{fig:settiing1and2}(a)). The success probability of the five arms is equally spaced in this range, for example, if the range is $0.1$ to $0.3$, the success probabilities for the five arms are $\{0.1,0.15,0.2,0.25,0.3\}.$  In the second set of results, we consider the last five parameter settings in Table \ref{table:resultsSummary_Set2}. We fix the range of success probabilities and vary the number of arms (Figures \ref{fig:settiing1and2}(b), \ref{fig:settiing1and2}(c), \ref{fig:settiing1and2}(d), \ref{fig:settiing2andcombined}(a), and \ref{fig:settiing2andcombined}(b)). As in the first set of simulations, the success probability of the arms is equally spaced in the specified range. Each reported data-point is the average value of $1000$ independent iterations.

\begin{table}[h!]
	\begin{center}
		\begin{tabular}{|c|c|c|} 
			\hline
			\textbf{Setting} & \textbf{Range} & \textbf{Number of Arms ($K$)} \\
			\hline
			\hline
			1.a & $[0.1;0.3]$ & 5\\
			\hline
			1.b & $[0.1;0.4]$ &5\\
			\hline
			1.c & $[0.1;0.5]$ &5\\
			\hline
			1.d & $[0.1;0.6]$ &5\\
			\hline
			1.e & $[0.1;0.7]$ & 5\\
			\hline
%
			2.a & $[0.05;0.9]$ & 2\\
			\hline
			2.b & $[0.05;0.9]$ &4\\
			\hline
			2.c & $[0.05;0.9]$ &6\\
			\hline
			2.d & $[0.05;0.9]$ &8\\
			\hline
			2.e & $[0.05;0.9]$ & 10\\
			\hline
		\end{tabular}
		\caption{Simulation parameters settings. The success probability of the arms is equally spaced in the specified range.}
		\label{table:resultsSummary_Set2}
	\end{center}
\end{table}

We show the time-evolution of regret for two of the five settings in each set. In addition, we show the regret at $T=10000$ for all five settings in each set (Figures \ref{fig:settiing2andcombined}(c) and \ref{fig:settiing2andcombined}(d)). 


Consistent with expectations, AoI-agnostic {policies} are outperformed by their AoI-aware versions across all settings. The most notable observation is that AA-TS consistently outperforms all other {policies}, followed closely by TS. Also, the performance of AA-TS improves significantly relative to TS as the uniform gap between the success probabilities decreases, i.e., the optimal channel becomes harder to find. {\color{black}Notably, TS and Q-TS always outperform UCB and Q-UCB respectively, for all settings considered. Q-TS performs significantly worse than TS, but the same does not always hold true for Q-UCB and UCB respectively, which only seem to follow this trend for a sufficiently high success probability of the optimal channel.

\section{Proofs}
\label{sec:proofs}

In this section, we discuss the proofs of the results presented in Section \ref{sec:mainResults}.


\subsection{Proof of Theorem \ref{theo:LB_AoI_regret}}

To prove this theorem, we construct an alternative service process described in \cite{krishnasamy2016learning}, such that under any scheduling policy, the AoI evolution for this system has the same distribution as that for the original system. The service process is constructed as follows: let $\{U(t)\}_{t\geq 1}$ be i.i.d random variables distributed uniformly in $(0,1)$. Let the service process for Channel $k$ be given by $R_k(t)=\Indi\{U(t)\leq \mu_k\}$  for all $t$. Note that $\E[R_{k}(t)]=\mu_k$, i.e., the marginals of the service offered by each channel under this constructions is the same as that in the original system.


The proof of the claim that for any scheduling policy, the AoI evolution for this system with coupled service processes across channels has the same distribution as that for the original system follows using arguments from Section 8.1 in \cite{krishnasamy2016learning}.
We use the following result from \cite{krishnasamy2016learning} to prove Theorem \ref{theo:LB_AoI_regret}.


\begin{lemma}[Corollary 20,\cite{krishnasamy2016learning}]\label{corr_1}
	Let $T_k(t)$ be the number of time-slots in which Channel $k$ is used in the time-interval 1 to $t-1$. For a problem instance $\muB$, let $\displaystyle \mu_{\text{min}} = \min_{i=1:K} \mu_i>0$ and $\displaystyle \mu^* = \max_{i=1:K} \mu_i$. For any $\alpha-$consistent policy $\calP$, there exist constants
	$\tau$ and $C$, s.t. for any $t >\tau$, \color{black} 
 \begin{align}
	\Delta\sum_{k\neq k^*}\E\left[T_k(t+1) \right] \geq (K-1)D(\muB)\left((1-\alpha)\log t\right.\nn\left.-\log (4KC)\right),
	\end{align}
	where $D(\muB)=\frac{\Delta}{\text{KL}\left(\mu_{\text{min}},\frac{\mu^* +1}{2}\right)}$, and ${\color{black}\displaystyle \Delta=\mu^{*}-\max_{k\neq k^*}\mu_k}.$
\end{lemma}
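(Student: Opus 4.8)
The plan is to prove this as a change-of-measure (``confusing instance'') lower bound in the spirit of Lai--Robbins, since $T_k(t+1)$ is a purely bandit-level quantity (the number of pulls) and the AoI coupling set up earlier plays no role here. Fix a suboptimal channel $k\neq k^*$. I would construct an alternative instance $\muB^{(k)}$ that agrees with $\muB$ on every channel except $k$, whose success probability is raised from $\mu_k$ to $q:=\frac{\mu^*+1}{2}>\mu^*$. Under $\muB$ channel $k$ is suboptimal, whereas under $\muB^{(k)}$ it is the unique best channel and every other channel $j\neq k$ (including the original $k^*$) is suboptimal. Let $C$ be the maximum of the $\alpha$-consistency constants over the original instance and the $K-1$ alternatives, so that a single constant governs all instances.

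Next I would extract two facts from $\alpha$-consistency over the horizon of $t$ slots. Under $\muB$, $\E_{\muB}[T_k(t+1)]\leq Ct^\alpha$, so Markov's inequality gives $\PP_{\muB}(T_k(t+1)>t/2)\leq 2Ct^{\alpha-1}$. Under $\muB^{(k)}$, every $j\neq k$ is suboptimal, so $\E_{\muB^{(k)}}\big[\sum_{j\neq k}T_j(t+1)\big]\leq (K-1)Ct^\alpha$; since $\sum_{j\neq k}T_j(t+1)=t-T_k(t+1)$, Markov gives $\PP_{\muB^{(k)}}(T_k(t+1)\leq t/2)\leq 2(K-1)Ct^{\alpha-1}$. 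The link between the two measures is the divergence-decomposition identity: because the instances differ only in channel $k$, the divergence between the laws of the full interaction trajectory factorizes as
\[
\text{KL}\big(\PP_{\muB}\,\|\,\PP_{\muB^{(k)}}\big)=\E_{\muB}[T_k(t+1)]\cdot \text{KL}\Big(\mu_k,\tfrac{\mu^*+1}{2}\Big).
\]

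I would then apply the Bretagnolle--Huber inequality with the event $A=\{T_k(t+1)>t/2\}$, which yields $\PP_{\muB}(A)+\PP_{\muB^{(k)}}(A^c)\geq \tfrac12\exp(-\text{KL}(\PP_{\muB}\|\PP_{\muB^{(k)}}))$. Bounding the left side by $2Ct^{\alpha-1}+2(K-1)Ct^{\alpha-1}=2KCt^{\alpha-1}$ and taking logarithms gives $\text{KL}(\PP_{\muB}\|\PP_{\muB^{(k)}})\geq (1-\alpha)\log t-\log(4KC)$, which is positive once $t$ exceeds a threshold $\tau$. Substituting the decomposition identity and using that the Bernoulli divergence $\text{KL}(p,q)$ is decreasing in $p$ for $p<q$ (so $\text{KL}(\mu_k,q)\leq \text{KL}(\mu_{\text{min}},q)$, as $\mu_{\text{min}}\leq\mu_k<q$) gives, for each $k\neq k^*$,
\[
\E_{\muB}[T_k(t+1)]\geq \frac{(1-\alpha)\log t-\log(4KC)}{\text{KL}\big(\mu_{\text{min}},\frac{\mu^*+1}{2}\big)}.
\]
Summing over the $K-1$ suboptimal channels and multiplying by $\Delta$ produces exactly $(K-1)D(\muB)\big((1-\alpha)\log t-\log(4KC)\big)$, as claimed.

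I expect the main obstacle to be the rigorous justification of the divergence-decomposition identity for an adaptive (possibly randomized) policy: one must set up the probability space of the entire history of pulls and rewards up to slot $t$, verify that the likelihood ratio between $\PP_{\muB}$ and $\PP_{\muB^{(k)}}$ telescopes so that only the rounds in which channel $k$ is played contribute, and take expectations via the tower property to obtain the $\E_{\muB}[T_k(t+1)]$ factor. The remaining work is bookkeeping: choosing the single constant $C$ uniformly over the $K$ instances and the threshold $\tau$ large enough that the numerator stays positive, and checking that the monotonicity direction of the Bernoulli KL makes the $\mu_k\to\mu_{\text{min}}$ replacement weaken the bound in the correct direction.
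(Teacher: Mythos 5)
Your proof is correct, but note that the paper itself does not prove this lemma at all: it is imported verbatim as Corollary~20 of \cite{krishnasamy2016learning}, and the surrounding text only verifies that the coupled-service construction makes that result applicable. What you have written is therefore a self-contained derivation of the cited result, and it follows the standard modern change-of-measure template (which is also the spirit of the proof in \cite{krishnasamy2016learning}): a confusing instance raising $\mu_k$ to $q=\frac{\mu^*+1}{2}$, Markov bounds on $\{T_k(t+1)>t/2\}$ under both measures via $\alpha$-consistency, the divergence-decomposition identity $\text{KL}(\PP_{\muB}\|\PP_{\muB^{(k)}})=\E_{\muB}[T_k(t+1)]\,\text{KL}\bigl(\mu_k,q\bigr)$, Bretagnolle--Huber, and the monotonicity $\text{KL}(\mu_k,q)\leq\text{KL}(\mu_{\text{min}},q)$ for $\mu_{\text{min}}\leq\mu_k<q$, which weakens the per-arm bound in the right direction; multiplying the summed bound by $\Delta$ reproduces the stated constant $D(\muB)$ exactly, including the $\log(4KC)$ term (your $2KCt^{\alpha-1}$ bound on the two error probabilities is where the factor $4KC$ arises). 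Two small points worth making explicit if you write this up: first, the divergence decomposition needs the observation model to be the Bernoulli success/failure of the scheduled channel --- in this paper the scheduler also sees the AoI, but since $a(t)$ is a deterministic function of the past success/failure sequence, the trajectory likelihood ratio is unchanged and the decomposition survives; second, your argument implicitly needs $\mu^*<1$ so that $q>\mu^*$ makes channel $k$ the \emph{unique} optimum in the alternative instance (if $\mu^*=1$ then $\text{KL}(\mu_{\text{min}},q)=\infty$, $D(\muB)=0$, and the claim is vacuous, so the edge case is harmless but should be dismissed explicitly). With those remarks added, the argument is complete and arguably more transparent than chasing the constants through the reference.
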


\begin{proof}[Proof of Theorem \ref{theo:LB_AoI_regret}] Let the AoI in time-slot $t$, under an $\alpha-$consistent policy and the {\color{black}genie} policy be denoted by $a(t)$ and $a^*(t)$ respectively. Let $S(t)$ and $S^*(t)$ be indicator random variables denoting successful updates in time-slot $t$ by an $\alpha-$consistent policy and the {\color{black}genie} policy respectively. By definition, 
$
	a(t)=(1-S(t))(a(t-1)+1)+S(t), 
	a^*(t)=(1-S^*(t))(a^*(t-1)+1)+S^*(t).
$
			It follows that
	$
	a(t)-a^*(t)=(1-S(t))(a(t-1)+1)+S(t) 
	 -(1-S^*(t))(a^*(t-1)+1)-S^*(t).
$
	In the coupled system, $a^*(t)\leq a(t)$, for all $t$. Therefore,
$
	a(t)-a^*(t)\geq 
	(S^*(t)-S(t))(a^*(t-1)).
$
	Taking expectations, it follows that
	$\E\left[a(t)-a^*(t)\right]\geq \E\left[S^*(t)-S(t)\right]\cdot \E\left[a^*(t-1)\right],$
	as $a^*(t-1)$ is independent of $S^*(t)$ and $S(t)$.
	Since the {\color{black}genie} policy always uses the best channel, $a^*(t)$ is a geometric random variable with parameter $\mu^*$. It follows that $\E\left[a^*(t)\right]=(\mu^*)^{-1}$, and therefore,
	\begin{align}
	\label{eq:R_lowerBound}
R_{\calP}(T)&\geq \frac{1}{\mu^*}\sum_{t=1}^{T}\E\left[S^*(t)-S(t)\right].
	\end{align}
	
	Let $Y_k(t)$ be an indicator random variable denoting if an update sent on Channel $k$ in time-slot $t$ will be successful. Let $Y^*(t)$ be an indicator random variable denoting if an update sent on the optimal channel in time-slot $t$ will be successful. Let $k(t)$ by the index of the channel used by the $\alpha-$consistent policy in time-slot $t$.
It follows that $S^*(t)=Y^*(t) \text{ and } S(t)=\sum_{k=1}^{K}\Indi\{k(t)=k\}Y_k(t).$
Therefore,
	\begin{align}
	\label{eq:diff_S}
	\E\left[S^*(t)-S(t)\right] =& \E\left[\sum_{k\neq k^*}\Indi\{k(t)=k\}(Y^*(t) - Y_k(t))\right] \nonumber \\
	= &\sum_{k\neq k^*}\left(\PP\left(\Indi\{k(t)=k\}=1\right) \PP\left(\mu_k<U(t)\leq\mu^*\right)\right)\nn\\
	=&\sum_{k\neq k^*}(\mu^* - \mu_k)\PP\left(\Indi\{k(t)=k \}=1\right) 
	\geq \Delta\sum_{k\neq k^*}\PP\left(\Indi\{k(t)=k \}=1\right).
	\end{align}
	From \eqref{eq:R_lowerBound} and \eqref{eq:diff_S},
	\begin{align}
	R_{\mathcal{P}}(T) 
	&\geq \frac{\Delta}{\mu^*}\sum_{t=1}^{T}\sum_{k\neq k^*}\PP\left(\Indi\{k(t)=k \}=1\right)=\frac{\Delta}{\mu^*}\sum_{k\neq k^*}\E\left[T_k(T+1)\right].
	\label{eq:lower_bound}
	\end{align}
	By Lemma \ref{corr_1} and \eqref{eq:lower_bound},
	\begin{align}
R_{\calP}(T)\geq& \frac{(K-1)D(\muB)}{\mu^*}\left((1-\alpha)\log T-\log(4KC)\right).\nn
	\end{align}
\end{proof}
\color{black}

\subsection{Proofs of Theorems \ref{theo:UB_AoI_regret_UCB_impro} and \ref{theo:UB_AoI_regret_TS_impro}}

In this section, we discuss the proofs of Theorems~\ref{theo:UB_AoI_regret_UCB_impro} and~\ref{theo:UB_AoI_regret_TS_impro}. We first provide an outline of these proofs. 

\subsubsection{Proof Outline} The proof uses the following arguments.
\begin{itemize}
\item[--] We first upper bound the expected cumulative AoI for any schedule by the expected cumulative AoI of an alternative schedule (Schedule A) in which all uses of sub-optimal channels in the original schedule are replaced by using the worst channel (channel with parameter $\mu_{\text{min}}$). (Lemma \ref{lemma:systemA})
\item[--] We further upper bound the expected cumulative AoI of Schedule A with a second alternative schedule (Schedule B) where all uses of the worst channel are clustered together starting from $T=1$, followed by all uses of the optimal channel. (Lemma \ref{lemma:systemA})
\item[--] We upper bound the expected cumulative AoI of Schedule B as a function of the length of the schedule and the expected number of uses of the worst channel. (Lemma \ref{lemma:expectedAgeBound2})
\item[--] We then substitute known bounds on the expected number of uses of sub-optimal channels under UCB and Thompson Sampling to get the desired results. (Lemma \ref{lemma:UCBbound})
\end{itemize}
\subsubsection{Proof Details}
\begin{lemma}
	\label{lemma:twoCases}
	Let $k \geq 0$, $k_1 \geq 0$ and $k_2 \geq 1$ be integers and $T = k+k_1+k_2+1$.
	Consider two sequences denoting the channels scheduled in time-slots one to $T$, denoted by $\textbf{K}_I(T)$ and $\textbf{K}_{II}(T)$. 
	The channels scheduled in the first $k$ time-slots are identical in $\textbf{K}_I(T)$ and $\textbf{K}_{II}(T)$. 
	
	In Case I, the optimal channel is scheduled in time-slot $k+1$, the worst channel (channel with parameter $\mu_{\text{min}}$) is scheduled in time-slots $k+2$ to $k+k_2+1$, and the optimal channel scheduled in time-slots $k+k_2+2$ to $k+k_1+k_2+1$.
	
	In Case II, the worst channel is scheduled in time-slots $k+1$ to $k+k_2$, and the optimal channel scheduled in time-slots $k+k_2+1$ to $k+k_1+k_2+1$.
	
	Let $\E_{I}[a(t)\vert\textbf{K}_I(T)]$ and $\E_{II}[a(t)\vert\textbf{K}_{II}(T)]$ denote the expected AoI in time-slot $t$ in the two cases.  Then, 
	\begin{align}
	\sum_{m=0}^{T-1}\E_{I}[a(T-m)\vert\textbf{K}_I(T)]&\leq \sum_{m=0}^{T-1}\E_{II}[a(T-m)\vert\textbf{K}_{II}(T)]\nn.
	\end{align}
\end{lemma}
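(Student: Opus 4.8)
The plan is to exploit the fact that the two schedules are \emph{identical except for a single transposition}. Comparing $\textbf{K}_I(T)$ and $\textbf{K}_{II}(T)$ slot by slot, they agree everywhere except in slot $k+1$ (optimal in Case~I, worst in Case~II) and in slot $k+k_2+1$ (worst in Case~I, optimal in Case~II): the slots strictly between them use the worst channel in both cases, and all slots after $k+k_2+1$ use the optimal channel in both cases. So the lemma really only asks for the effect of moving one optimal transmission from the front to the back of the worst-channel block. First I would pass to expected AoI and use linearity: writing $\phi_\mu(x)=1+(1-\mu)x$ for the one-step expected-AoI map of a channel with success probability $\mu$ (so $\E[a(s+1)]=\phi_\mu(\E[a(s)])$ when slot $s$ uses that channel), I would track the discrepancy $\delta_s=\E_I[a(s)\mid\textbf{K}_I(T)]-\E_{II}[a(s)\mid\textbf{K}_{II}(T)]$. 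Since the prefix is common, $\delta_s=0$ for $s\le k+1$; let $\bar A=\E[a(k+1)\mid\textbf{K}(T)]$ be the common expected AoI entering slot $k+1$. On any slot where both cases use the same channel $\mu$, linearity gives $\delta_{s+1}=(1-\mu)\delta_s$, so the gap merely contracts by $(1-\mu)$; the only nontrivial updates occur at the two differing slots.

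The remaining steps are purely computational. At slot $k+1$ the gap opens to $\delta_{k+2}=-(\mu^*-\mu_{\text{min}})\bar A\le 0$ (Case~I is ahead, having used the better channel), and it then contracts by $(1-\mu_{\text{min}})$ across the worst block. At the second differing slot $k+k_2+1$ I would substitute the explicit value of $\E_{II}[a(k+k_2+1)]$ obtained by iterating $\phi_{\mu_{\text{min}}}$; the terms proportional to $\bar A$ cancel, leaving $\delta_{k+k_2+2}=\tfrac{\mu^*-\mu_{\text{min}}}{\mu_{\text{min}}}\bigl(1-(1-\mu_{\text{min}})^{k_2}\bigr)\ge 0$, after which the gap contracts by $(1-\mu^*)$ across the final optimal block. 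Summing the two geometric series over $s=k+2,\dots,T$ gives the closed form
\begin{align*}
\sum_{s=1}^{T}\delta_s=\frac{\mu^*-\mu_{\text{min}}}{\mu_{\text{min}}}\Bigl(1-(1-\mu_{\text{min}})^{k_2}\Bigr)\left[\frac{1-(1-\mu^*)^{k_1}}{\mu^*}-\bar A\right],
\end{align*}
which is exactly the difference between the two sums in the statement. The prefactor is nonnegative, so the sign is governed entirely by the bracket.

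The hard part — and the only place the problem structure genuinely enters — is showing the bracket is nonpositive, i.e. $\bar A\ge\tfrac{1-(1-\mu^*)^{k_1}}{\mu^*}$. This is \emph{not} true for an arbitrary entering AoI value: conditioned on $a(k+1)$ being small, placing the optimal transmission first can actually raise the expected AoI, since $\phi_{\mu^*}$ pushes values below its fixed point $1/\mu^*$ upward, and then Case~I can be worse than Case~II. What rescues the argument is that we only need the inequality \emph{in expectation}: because $\tfrac{1-(1-\mu^*)^{k_1}}{\mu^*}<\tfrac{1}{\mu^*}$, it suffices to show $\bar A=\E[a(k+1)]\ge 1/\mu^*$. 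I would establish this from the coupling used in the proof of Theorem~\ref{theo:LB_AoI_regret}: under Assumption~\ref{assumption:initialConditions} the AoI under any fixed schedule dominates the genie's AoI $a^*$ pointwise, and $\E[a^*(k+1)]=1/\mu^*$. Equivalently, one can argue by induction, using $\E[a(1)]\ge 1/\mu^*$ together with the elementary fact that $\phi_\mu(x)\ge 1/\mu^*$ whenever $x\ge 1/\mu^*$ and $\mu\le\mu^*$, so that $\E[a(s)]\ge 1/\mu^*$ for every $s$.

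With $\bar A\ge 1/\mu^*$ in hand the bracket is nonpositive, the nonnegative prefactor cannot change the sign, and the desired inequality $\sum_m\E_I[a(T-m)\mid\textbf{K}_I(T)]\le\sum_m\E_{II}[a(T-m)\mid\textbf{K}_{II}(T)]$ follows. I expect the bookkeeping of the two geometric sums and the cancellation of the $\bar A$ terms at slot $k+k_2+1$ to be routine; the conceptual crux is recognizing that the comparison hinges on $\bar A$ versus $1/\mu^*$ and that the genie-domination property supplies precisely the bound needed.
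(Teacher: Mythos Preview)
Your proposal is correct, and the route you take is noticeably different from the paper's. The paper works with the tail-sum representation $\E[a(t)]=\sum_{\tau\ge 0}\prod_{i}(1-\mu_{k(t-i)})$, writes out $\E_I[a(T-m)]$ and $\E_{II}[a(T-m)]$ explicitly for four separate ranges of $m$ (equations \eqref{eq:6}--\eqref{eq:11}), subtracts, and then collapses the resulting geometric sums in \eqref{eq:12}--\eqref{eq:14}. You instead observe that $\textbf{K}_I$ and $\textbf{K}_{II}$ differ by a single transposition and propagate the gap $\delta_s$ through the affine recursion $\E[a(s+1)]=1+(1-\mu_{k(s)})\E[a(s)]$, which reduces the whole computation to two geometric series. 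Both arguments reach the same closed form---a nonnegative prefactor $\frac{\mu^*-\mu_{\text{min}}}{\mu_{\text{min}}}\bigl(1-(1-\mu_{\text{min}})^{k_2}\bigr)$ times a bracket that must be shown nonpositive---and both close with the same key inequality: the expected AoI entering the block is at least $1/\mu^*$. In the paper this appears as the lower bound $1+\sum_j\prod_i(1-\mu_{k(\cdot)})\ge 1/\mu^*$ on the tail term carried from slots $-\infty$ to $k$; in your argument it is the statement $\bar A=\E[a(k+1)]\ge 1/\mu^*$, which you justify via the genie coupling (or, equivalently, the invariance of the half-line $[1/\mu^*,\infty)$ under every $\phi_\mu$ with $\mu\le\mu^*$). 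The small discrepancy between your exponent $k_1$ and the paper's $k_1+1$ is an artifact of the paper's indexing convention in its tail formula and does not affect the sign of the bracket. Your approach is more economical and makes the role of the entering expected AoI $\bar A$ versus the threshold $1/\mu^*$ explicit, which is exactly the conceptual point; the paper's approach is more mechanical but avoids any appeal to the recursion.
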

\begin{proof}
	By definition, $\PP(a(t)>\tau)=\prod_{i=0}^{\tau}(1-\mu_{k(t-i)}),$ and $$\E_{I}[a(t)\vert\textbf{K}_I(T)]=\sum_{\tau=0}^{\infty}\prod_{i=0}^{\tau}(1-\mu_{k(t-i)}).$$
	For $0 \leq m \leq k_1-1$, the expected AoI at time $t$ for the two cases are:
	\begin{align}
	\E_{I}[a(T-m)\vert \textbf{K}_{I}(T)]&=1+\sum_{i=1}^{k_1 -m}(1-\mu^*)^i+(1-\mu^*)^{k_1-m}\sum_{j=1}^{k_2}(1-\mu_{\text{min}})^{j}\nn\\
	&\hspace{0.25in}+(1-\mu^*)^{k_1-m+1}(1-\mu_{\text{min}})^{k_2}+c_m,\nn\\
	\E_{II}[a(T-m)\vert \textbf{K}_{II}(T)]&=1+\sum_{i=1}^{k_1 -m+1}(1-\mu^*)^i+(1-\mu^*)^{k_1-m+1}\sum_{j=1}^{k_2}(1-\mu_{\text{min}})^{j}+c_m,\nn
	\end{align}
	where $c_m$ is a function of the channels scheduled in time-slots $-\infty$ to $k$.
	It follows that
	\begin{align}
	&\E_{I}[a(T-m)\vert \textbf{K}_{I}(T)]-\E_{II}[a(T-m)\vert \textbf{K}_{II}(T)]\nn\\
	&\hspace{0.5in}=(1-\mu^*)^{k_1-m}\mu^*\sum_{j=1}^{k_2-1}(1-\mu_{\text{min}})^{j}+(1-\mu^*)^{k_1-m}((1-\mu_{\text{min}})^{k_2}-(1-\mu^*)).\label{eq:6}
	\end{align}
	For $m=k_1$,
	\begin{align}
	\E_{I}[a(T-m)\vert\textbf{K}_{I}(T)]-\E_{II}[a(T-m)\vert\textbf{K}_{II}(T)]
	=\mu^*\sum_{j=1}^{k_2-1}(1-\mu_{\text{min}})^{j}+(1-\mu_{\text{min}})^{k_2}-(1-\mu^*).\label{eq:7}
	\end{align}
	Combining \eqref{eq:6} and \eqref{eq:7}, for $0 \leq m \leq k_1$,
	\begin{align}
	&\E_{I}[a(T-m)\vert\textbf{K}_{II}(T)]-\E_{II}[a(T-m)\vert\textbf{K}_{II}(T)] \nn \\
	&=(1-\mu^*)^{k_1-m}\mu^*\sum_{j=1}^{k_2-1}(1-\mu_{\text{min}})^{j}+(1-\mu^*)^{k_1-m}((1-\mu_{\text{min}})^{k_2}-(1-\mu^*)).\label{eq:8}
	\end{align}
	For $k_1+1 \leq m \leq k_1+k_2-1$,
	\begin{align}
	E_{I}[a(T-m)\vert\textbf{K}_{I}(T)]&=1+\sum_{j=1}^{k_2-m+k_1}(1-\mu_{\text{min}})^j+(1-\mu_{\text{min}})^{k_2-m+k_1}(1-\mu^*)\nn\\
	&\hspace{0.25in}+(1-\mu_{\text{min}})^{k_2-m+k_1}(1-\mu^*)\sum_{j=k_2-m+k_1+2}^{\infty}\;\prod_{i=k_2-m+k_1+2}^{j}(1-\mu_{k(T-m-i)}),\nn\\
	\E_{II}[a(T-m)\vert\textbf{K}_{II}(T)]&=1+\sum_{j=1}^{k_2-m+k_1+1}(1-\mu_{\text{min}})^{j}+(1-\mu_{\text{min}})^{k_2-m+k_1+1}\times\nn\\
	&\hspace{0.4in}\sum_{j=k_2-m+k_1+2}^{\infty}\;\prod_{i=k_2-m+k_1+2}^{j}(1-\mu_{k(T-m-i)}).\nn
	\end{align}
	Therefore,
	\begin{align}
	&\E_{I}[a(T-m)\vert\textbf{K}_{I}(T)]-\E_{II}[a(T-m)\vert\textbf{K}_{II}(T)]\nn\\
	&\hspace{0.15in}=(1-\mu_{\text{min}})^{k_2-m+k_1}(\mu_{\text{min}}-\mu^*)\left(1+\sum_{j=k_2-m+k_1+2}^{\infty}\;\prod_{i=k_2-m+k_1+2}^{j}(1-\mu_{k(T-m-i)})\right).\label{eq:9}
	\end{align}	
	For $m=k_1+k_2$,
	\begin{align}
	\E_{I}[a(T-m)\vert\textbf{K}_{I}(T)]&=1+(1-\mu^*)+(1-\mu^*)\sum_{j=k_2-m+k_1+2}^{\infty}\;\prod_{i=k_2-m+k_1+2}^{j}(1-\mu_{k(T-m-i)}),\nn\\
	\E_{II}[a(T-m)\vert\textbf{K}_{II}(T)]&=1+(1-\mu_{\text{min}})+(1-\mu_{\text{min}})\sum_{j=k_2-m+k_1+2}^{\infty}\;\prod_{i=k_2-m+k_1+2}^{j}(1-\mu_{k(T-m-i)}).\nn
	\end{align}
	Therefore,
	\begin{align}
	&\E_{I}[a(T-m)\vert\textbf{K}_{I}(T)]-
	\E_{II}[a(T-m)\vert\textbf{K}_{II}(T)]\nn\\
	&\hspace{0.15in}=(\mu_{\text{min}}-\mu^*)\left(1+\sum_{j=k_2-m+k_1+2}^{\infty}\;\prod_{i=k_2-m+k_1+2}^{j}(1-\mu_{k(T-m-i)})\right)\label{eq:10}
	\end{align}
	Combining \eqref{eq:9} and \eqref{eq:10}, for $k_1+1\leq m \leq k_1+k_2$,
	\begin{align}
	&\E_{I}[a(T-m)\vert\textbf{K}_{I}(T)]-\E_{II}[a(T-m)\vert\textbf{K}_{II}(T)]\nn\\
	&\hspace{0.15in}=(1-\mu_{\text{min}})^{k_2-m+k_1}(\mu_{\text{min}}-\mu^*)\left(1+\sum_{j=k_2-m+k_1+2}^{\infty}\;\prod_{i=k_2-m+k_1+2}^{j}(1-\mu_{k(T-m-i)})\right).\label{eq:11}
	\end{align}
	Note that in \eqref{eq:8},
	\begin{align}
	&\sum_{m=0}^{k_1}(1-\mu^*)^{k_1-m}\mu^*\sum_{j=1}^{k_2-1}(1-\mu_{\text{min}})^j+((1-\mu_{\text{min}})^{k_2}-(1-\mu^*))\sum_{m=0}^{k_1}(1-\mu^*)^{k_1-m}\nn\\
	&=\left[\frac{\mu^*(1-\mu_{\text{min}})(1-(1-\mu_{\text{min}})^{k_2-1})}{\mu_{\text{min}}}+((1-\mu_{\text{min}})^{k_2}-(1-\mu^*))\right]\times\frac{1-(1-\mu^*)^{k_1+1}}{\mu^*}.\label{eq:12}
	\end{align}
	Similarly in \eqref{eq:11},
	\begin{align}
	&\sum_{m=k_1+1}^{k_1+k_2}(1-\mu_{\text{min}})^{k_2-m+k_1}(\mu_{\text{min}}-\mu^*)\left(1+\sum_{j=k_2-m+k_1+2}^{\infty}\;\prod_{i=k_2-m+k_1+2}^{j}(1-\mu_{k(T-m-i)})\right)\nn\\
	&\hspace{0.15in}=(\mu_{\text{min}}-\mu^*)\left(1+\sum_{j=k_2-m+k_1+2}^{\infty}\;\prod_{i=k_2-m+k_1+2}^{j}(1-\mu_{k(T-m-i)})\right)\times\frac{1-(1-\mu_{\text{min}})^{k_2}}{\mu_{\text{min}}}.\label{eq:13}
	\end{align}
	Also, the expression in square brackets in \eqref{eq:12}, given by
	\begin{align}
	&\frac{\mu^*(1-\mu_{\text{min}})}{\mu_{\text{min}}}-\frac{\mu^*(1-\mu_{\text{min}})^{k_2}}{\mu_{\text{min}}}+\frac{\mu_{\text{min}}(1-\mu_{\text{min}})^{k_2}}{\mu_{\text{min}}}-\frac{\mu_{\text{min}}(1-\mu^*)}{\mu_{\text{min}}}\nn\\
	&\hspace{0.15in}=\frac{\mu^*-\mu_{\text{min}}}{\mu_{\text{min}}}\left(1-(1-\mu_{\text{min}})^{k_2}\right)\label{eq:14}
	\end{align}
	Combining \eqref{eq:12}, \eqref{eq:13} and \eqref{eq:14}, we have that
	\begin{align}
	&\frac{(\mu^*-\mu_{\text{min}})\left(1-(1-\mu_{\text{min}})^{k_2}\right)}{\mu_{\text{min}}}\left(\frac{1-(1-\mu^*)^{k_1+1}}{\mu^*}\right)\nn\\
	&\hspace{0.25in}+\frac{(1-(1-\mu_{\text{min}})^{k_2})(\mu_{\text{min}}-\mu^*)}{\mu_{\text{min}}}\left(1+\sum_{j=k_2-m+k_1+2}^{\infty}\;\prod_{i=k_2-m+k_1+2}^{j}(1-\mu_{k(T-m-i)})\right)\hspace{3in}\nn\\
	&=\frac{(\mu^*-\mu_{\text{min}})(1-(1-\mu_{\text{min}})^{k_2})}{\mu_{\text{min}}}\left[\frac{1-(1-\mu^*)^{k_1+1}}{\mu^*}-1\right.\nn\\
	&\hspace{2.5in}-\left.\sum_{j=k_2-m+k_1+2}^{\infty}\;\prod_{i=k_2-m+k_1+2}^{j}(1-\mu_{k(T-m-i)})\right].
	\end{align}
	It follows that
	\begin{align}
	&\sum_{m=0}^{k_1+k_2}\left(\E_{I}[a(T-m)\vert\textbf{K}_{I}(T)]-\E_{II}[a(T-m)\vert\textbf{K}_{II}(T)]\right)\nn\\&\hspace{0.25in}\leq{\underbrace{\frac{(\mu^*-\mu_{\text{min}})(1-(1-\mu_{\text{min}})^{k_2})}{\mu_{\text{min}}}}_{> 0}}{\underbrace{\left[\frac{1-(1-\mu^*)^{k_1+1}}{\mu^*}-\frac{1}{\mu^*}\right]}_{<0}}.\nn
	\end{align}
	Since the channels scheduled in $-\infty \leq t \leq k$ are identical in the two cases, it follows that
	\begin{align}
	\sum_{m=0}^{T-1}\E_{I}[a(T-m)\vert\textbf{K}(T)]\leq \sum_{m=0}^{T-1}\E_{II}[a(T-m)\vert\textbf{K}(T)],\nn
	\end{align}
	thus proving the result.
	%
\end{proof}

\begin{lemma}
	\label{lemma:systemA}
	
	Let $\textbf{K}(T)$ be a sequence of channels scheduled in time-slots one to $T$ and let $N(\textbf{K}(T))$ denote the number of time-slots in which a sub-optimal channel is used in time-slots one to $T$ under $\textbf{K}(T)$. Let $\textbf{K}_A(T)$ be an alternative sequence of channels scheduled in time-slots one to $T$ derived from $\textbf{K}(T)$ such that all uses of the sub-optimal channel in $\textbf{K}(T)$ are replaced by the worst channel, i.e., the channel with parameter $\mu_{\text{min}}$. Let $\textbf{K}_B(T)$ be another alternative sequence of channels scheduled in time-slots one to $T$ derived from $\textbf{K}_A(T)$ such that the worst channel is used in time-slots one to $N(\textbf{K}(T))$ and the optimal channel is used thereafter, i.e., in time-slots $N(\textbf{K}(T))+1$ to $T$. Then we have that,
	
	\begin{align}
	\sum_{m=0}^{T-1}\E[a(T-m)\vert\textbf{K}(T)]\leq \sum_{m=0}^{T-1}\E[a(T-m)\vert\textbf{K}_A(T)] \leq \sum_{m=0}^{T-1}\E[a(T-m)\vert\textbf{K}_B(T)].
	\end{align}

\end{lemma}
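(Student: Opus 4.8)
The plan is to establish the two inequalities separately, since they rest on different ideas. For the first inequality, comparing $\textbf{K}(T)$ with $\textbf{K}_A(T)$, I would rely on the closed form $\E[a(t)\vert\textbf{K}(T)]=\sum_{\tau=0}^{\infty}\prod_{i=0}^{\tau}(1-\mu_{k(t-i)})$ established at the start of the proof of Lemma~\ref{lemma:twoCases}. The key observation is monotonicity: for any sub-optimal channel $k\neq k^*$ we have $\mu_{\text{min}}\leq\mu_k$, so replacing its use in a single time-slot $s$ by the worst channel only increases the factor $(1-\mu_{k(s)})$ and leaves every other factor unchanged. Consequently every product in the sum that involves slot $s$ weakly increases, hence so does $\E[a(t)\vert\cdot]$ for every $t$, and hence the cumulative sum. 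I would make this rigorous by performing the $N(\textbf{K}(T))$ replacements one at a time, forming a chain $\textbf{K}(T)=\textbf{K}^{(0)},\dots,\textbf{K}^{(N(\textbf{K}(T)))}=\textbf{K}_A(T)$ along which the cumulative expected AoI is non-decreasing.

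For the second inequality, comparing $\textbf{K}_A(T)$ with $\textbf{K}_B(T)$, the idea is an exchange (bubble-sort) argument built entirely on Lemma~\ref{lemma:twoCases}. Recall that $\textbf{K}_A(T)$ uses only the worst and optimal channels, with $N(\textbf{K}(T))$ uses of the worst channel, while $\textbf{K}_B(T)$ clusters all worst-channel uses in slots $1$ to $N(\textbf{K}(T))$ followed by optimal-channel uses. Lemma~\ref{lemma:twoCases} is precisely the statement that, when the schedule ends with a block of optimal-channel uses preceded by a block of worst-channel uses preceded by a single optimal-channel use, moving that optimal use to the end (Case~I $\to$ Case~II) does not decrease the cumulative expected AoI. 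I would therefore transform $\textbf{K}_A(T)$ into $\textbf{K}_B(T)$ by repeatedly locating the rightmost worst-channel use, which is automatically followed only by optimal-channel uses, identifying the maximal worst-channel block ending at it together with the optimal use immediately preceding that block, and applying Lemma~\ref{lemma:twoCases} to push that optimal use to the end. Each such step matches the hypotheses of Lemma~\ref{lemma:twoCases} with $k_2$ equal to the block length and the fixed common prefix playing the role of slots $-\infty$ to $k$, and each step weakly increases the cumulative expected AoI.

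To guarantee the procedure terminates at $\textbf{K}_B(T)$, I would track the potential given by the number of ordered pairs in which an optimal-channel use precedes a worst-channel use. Each application of Lemma~\ref{lemma:twoCases} strictly decreases this potential, by the length $k_2$ of the block the optimal use jumps over, so after finitely many steps the potential is zero, which means every worst-channel use precedes every optimal-channel use in slots $1$ to $T$; since the common history before slot $1$ is untouched throughout, this terminal schedule is exactly $\textbf{K}_B(T)$. Chaining the inequalities from all intermediate steps yields $\sum_{m=0}^{T-1}\E[a(T-m)\vert\textbf{K}_A(T)]\leq\sum_{m=0}^{T-1}\E[a(T-m)\vert\textbf{K}_B(T)]$, and combining with the first inequality completes the proof.

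The main obstacle is the second inequality, and specifically the bookkeeping needed to ensure that Lemma~\ref{lemma:twoCases} genuinely applies at every step: its statement requires the entire suffix after the swapped block to consist of optimal-channel uses, so the induction must be organized (processing from the right, always anchored at the rightmost worst-channel use) so that this all-optimal-suffix condition is preserved after each move. The first inequality, by contrast, is a routine term-by-term monotonicity argument.
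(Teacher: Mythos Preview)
Your proposal is correct and follows essentially the same approach as the paper: the first inequality by monotonicity of the product formula under the replacement $\mu_k\mapsto\mu_{\text{min}}$, and the second by repeatedly applying Lemma~\ref{lemma:twoCases} to migrate optimal-channel uses past worst-channel blocks until the schedule becomes $\textbf{K}_B(T)$. The paper's own proof is terser---it simply asserts that whenever $\textbf{K}_A(T)\neq\textbf{K}_B(T)$ one can find $k,k_1,k_2$ matching Case~I and ``recursively apply the same argument''---whereas you supply the explicit right-to-left organization and the inversion-count potential to justify termination, but the underlying mechanism is the same.
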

\begin{proof}
	
	Since, $\mu_{\text{min}} \leq \mu_{k(t)}\ \forall t$, it follows that
	\begin{align}
	\sum_{m=0}^{T-1}\E[a(T-m)\vert\textbf{K}(T)]\leq \sum_{m=0}^{T-1}\E[a(T-m)\vert\textbf{K}_A(T)]. \nn
	\end{align}
	
	Further, if $\textbf{K}_A(T) \neq \textbf{K}_B(T)$, there exists constants $k \geq 0$, $k_1 \geq 1$, and  $k_2 \geq 0$ such that $\textbf{K}_A(T)$ satisfies the conditions of Case I discussed in Lemma \ref{lemma:twoCases}. Further, using Lemma \ref{lemma:twoCases}, the expected cumulative AoI conditioned on $\textbf{K}_A(T)$ is upper bounded by the the expected cumulative AoI in the corresponding Case II sequence. We recursively apply the same argument on the sequence of channels scheduled in Case II till the Case II sequence is equal to $\textbf{K}_B(T)$. Therefore,
	\begin{align}
	\sum_{m=0}^{T-1}\E[a(T-m)\vert\textbf{K}_A(T)] \leq \sum_{m=0}^{T-1}\E[a(T-m)\vert\textbf{K}_B(T)], \nn
	\end{align}
	thus proving the result.
\end{proof}
%
\color{red}

\color{black}
\begin{lemma}
	\label{lemma:expectedAgeBound2}
	Let $k(t)$ denote the index of the communication channel used in time-slot $t$ and $k^*$ be the index of the optimal channel. Let $\textbf{K}(T) = \{k(1), k(2), \cdots, k(T)\}$ be the sequence of channels used in time-slots $1$ to $T$ and
	$$N(\textbf{K}(T)) = \sum_{t=1}^T \mathbbm{1}_{k(t) \neq k^*},$$ denote the number of time-slots in which a sub-optimal channel is used. 	
	Under Assumption \ref{assumption:initialConditions},
	\begin{align*}
	\sum_{t = 1}^{T} \E[a(t)]  \leq \frac{T}{\mu^*}+\frac{1-\mu^*}{\mu^*\mu_{\text{min}}}+\left(\frac{1}{\mu_{\text{min}}}-\frac{1}{\mu^*}\right)\E[N(\bfK).
	\end{align*}
\end{lemma}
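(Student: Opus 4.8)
The plan is to use the reduction already set up in Lemma~\ref{lemma:systemA} to replace the policy's random, adaptive schedule by a single \emph{deterministic} schedule whose cumulative AoI can be evaluated exactly, and only at the very end to average over the randomness. First I would condition on the realized schedule $\textbf{K}(T)$ and invoke Lemma~\ref{lemma:systemA} to write $\sum_{m}\E[a(T-m)\mid\textbf{K}(T)]\le \sum_{m}\E[a(T-m)\mid\textbf{K}_B(T)]$, where $\textbf{K}_B(T)$ uses the worst channel in slots $1$ to $N(\textbf{K}(T))$ and the optimal channel thereafter. Since $\textbf{K}_B(T)$ is deterministic, its per-slot successes are independent Bernoulli trials, so the one-step mean recursion $\E[a(t)]=1+(1-\mu_{k(t-1)})\E[a(t-1)]$, which follows directly from Definition~\ref{def:AoI_regret}, is valid; this recursion is the engine of the computation.

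The key step is a telescoping identity. Summing the recursion over $t=2,\dots,T+1$ collapses the left side and yields $\sum_{s=1}^{T}\mu_{k(s)}\E[a(s)]=T+\E[a(1)]-\E[a(T+1)]$. Writing $\mu_{k(s)}=\mu^*-(\mu^*-\mu_{k(s)})$ and observing that under $\textbf{K}_B(T)$ the gap $\mu^*-\mu_{k(s)}$ equals $\mu^*-\mu_{\text{min}}$ on exactly the $N(\textbf{K}(T))$ worst-channel slots and vanishes elsewhere, this rearranges to
\begin{align}
\mu^*\sum_{s=1}^{T}\E[a(s)] = T+\E[a(1)]-\E[a(T+1)]+(\mu^*-\mu_{\text{min}})\sum_{s=1}^{N(\textbf{K}(T))}\E[a(s)].\nn
\end{align}

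I would then bound the three residual pieces. By Assumption~\ref{assumption:initialConditions} the candidate and the genie share the stationary initial AoI of the optimal channel, so $\E[a(1)]=1/\mu^*$; trivially $\E[a(T+1)]\ge 1$; and $\E[a(s)]\le 1/\mu_{\text{min}}$ for every $s$, which follows by induction from $\E[a(1)]=1/\mu^*\le 1/\mu_{\text{min}}$ because $1/\mu_{\text{min}}$ is the fixed point of the worst-channel map and dominates the fixed point of the optimal-channel map. Substituting $\sum_{s=1}^{N(\textbf{K}(T))}\E[a(s)]\le N(\textbf{K}(T))/\mu_{\text{min}}$ and dividing by $\mu^*$ gives, for each realization, the affine-in-$N$ estimate $\sum_{s}\E[a(s)\mid\textbf{K}_B(T)]\le \frac{T}{\mu^*}+\frac{\E[a(1)]-1}{\mu^*}+\left(\frac{1}{\mu_{\text{min}}}-\frac{1}{\mu^*}\right)N(\textbf{K}(T))$, where the constant simplifies as $\frac{\E[a(1)]-1}{\mu^*}=\frac{1-\mu^*}{(\mu^*)^2}\le\frac{1-\mu^*}{\mu^*\mu_{\text{min}}}$ using $\mu_{\text{min}}\le\mu^*$.

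Finally, taking expectations over the random schedule (tower property) replaces $N(\textbf{K}(T))$ by $\E[N(\textbf{K}(T))]$ and delivers the claim, since the per-realization bound is affine in $N$. I expect the conditioning step to be the main obstacle: for an adaptive policy one cannot naively apply the mean recursion to the realized schedule, because the channel choices are correlated with past successes, so it is essential first to pass to the deterministic schedule $\textbf{K}_B(T)$ via Lemma~\ref{lemma:systemA} (itself justified through the coupled service process introduced in the proof of Theorem~\ref{theo:LB_AoI_regret}) before invoking the recursion. Pinning down the exact constant $\frac{1-\mu^*}{\mu^*\mu_{\text{min}}}$ then rests entirely on the initial condition $\E[a(1)]=1/\mu^*$ and the uniform bound $\E[a(s)]\le 1/\mu_{\text{min}}$.
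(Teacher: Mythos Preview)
Your proposal is correct and takes a genuinely different route from the paper after the common first step. Both arguments invoke Lemma~\ref{lemma:systemA} to pass from the adaptive schedule to the deterministic schedule $\textbf{K}_B(T)$ (worst channel on $\{1,\dots,n\}$, optimal on $\{n+1,\dots,T\}$, with $n=N(\textbf{K}(T))$), and then average over $n$ at the end. The paper proceeds by a slot-by-slot computation: for each $m$ it writes $\E_B[a(T-m)]$ as an explicit sum of geometric terms via $\PP(a(t)>\tau)=\prod_i(1-\mu_{k(t-i)})$, splits into the ranges $0\le m\le T-n-1$ and $T-n\le m\le T-1$, and then sums and simplifies. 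You instead telescope the one-step mean recursion $\E[a(t)]=1+(1-\mu_{k(t-1)})\E[a(t-1)]$ to the exact identity $\sum_{s=1}^{T}\mu_{k(s)}\E[a(s)]=T+\E[a(1)]-\E[a(T+1)]$, write $\mu_{k(s)}=\mu^*-(\mu^*-\mu_{k(s)})$, and close with the uniform bound $\E[a(s)]\le 1/\mu_{\text{min}}$. Your route avoids the case split and the geometric bookkeeping entirely, and in fact produces the sharper additive constant $\tfrac{1-\mu^*}{(\mu^*)^2}$ (from $\E[a(1)]=1/\mu^*$ under Assumption~\ref{assumption:initialConditions}) before you relax it to $\tfrac{1-\mu^*}{\mu^*\mu_{\text{min}}}$ to match the stated lemma; the paper gets the latter constant directly because it upper-bounds the pre-slot-$1$ history by the worst channel inside its geometric sums. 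You are also right to flag that the recursion is only valid once one has passed to the deterministic $\textbf{K}_B(T)$; the paper uses Lemma~\ref{lemma:systemA} in exactly the same way (see its equation preceding the slotwise computation), so both proofs rest on the same reduction.
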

\begin{proof}
	From Lemma \ref{lemma:systemA},
	\begin{align}
	\sum_{t=1}^{T}\E[a(t)]
	&\leq \E\left[\sum_{m=0}^{T-1}\E[a(T-m)\vert \textbf{K}_B(T),N(\textbf{K}_B(T))=n]\right].\label{eq:15}
	\end{align}
	Note that for $0 \leq m \leq T-n-1$,
	\begin{align}
	\E_{B}[a(T-m)\vert\bfK,N(\bfK)=n]=& 1+\sum_{i=1}^{T-n-m}(1-\mu^*)^{i}+(1-\mu^*)^{T-n-m}\sum_{j=1}^{\infty}(1-\mu_{\text{min}})^{j}\nn\\
	=&\frac{1-(1-\mu^*)^{T-n-m+1}}{\mu^*}+\frac{(1-\mu^*)^{T-n-m}}{\mu_{\text{min}}},\label{eq:16}
	\end{align}
	and for $T-n \leq m \leq T-1$,
	\begin{align}
	\E_{B}[a(T-m)\vert\bfK,N(\bfK)=n]\leq \frac{1}{\mu_{\text{min}}}.\label{eq:17}
	\end{align}
	Combining \eqref{eq:15}, \eqref{eq:16} and \eqref{eq:17},
	\begin{align}
	&\E\left[\sum_{m=0}^{T-1}\E_{B}[a(T-m)\vert\bfK,N(\bfK)=n]\right]\nn\\ &\hspace{0.15in}\leq\E\left[\sum_{m=0}^{T-n-1}\left(\frac{1-(1-\mu^*)^{T-n-m+1}}{\mu^*}+\frac{(1-\mu^*)^{T-n-m}}{\mu_{\text{min}}}\right)
	+\sum_{m=T-n}^{T-1}\frac{1}{\mu_{\text{min}}}\right]\hspace{1in}\nn\\
	&\hspace{0.15in}=\E\left[\sum_{m=0}^{T-n-1}\frac{1-(1-\mu^*)^{T-n-m+1}}{\mu^*}+\frac{(1-\mu^*)\left(1-(1-\mu^*)^{T-n}\right)}{\mu_{\text{min}}\mu^*}+\frac{n}{\mu_{\text{min}}}\right]\nn\\
	&\hspace{0.15in}\leq\E\left[\frac{T-n}{\mu^*}\left(1-(1-\mu^*)^{T-n+1}\right)+\frac{(1-\mu^*)\left(1-(1-\mu^*)^{T-n}\right)}{\mu_{\text{min}}\mu^*}+\frac{n}{\mu_{\text{min}}}\right]\nn\\
	&\hspace{0.15in}\leq\E\left[\frac{T-n}{\mu^*}+\frac{1-\mu^*}{\mu^*\mu_{\text{min}}}+\frac{n}{\mu_{\text{min}}}\right]=\frac{T}{\mu^*}+\frac{1-\mu^*}{\mu^*\mu_{\text{min}}}+\left(\frac{1}{\mu_{\text{min}}}-\frac{1}{\mu^*}\right)\E[N(\bfK)].\nn
	\end{align}
\end{proof}

The next lemma summarizes the results from Theorem 1 in \cite{auer2002finite} and {\color{black}Theorem 2} in \cite{kaufmann2012thompson} to provide upper bounds on the number of time-slots in which a sub-optimal channel is picked by UCB and Thompson Sampling. 
\begin{lemma}
	\label{lemma:UCBbound}
	Let $k(t)$ denote the index of the communication channel used in time-slot $t$ and $k^*$ be the index of the optimal channel. Let $\displaystyle \E_\text{UCB} \left[N(\mathbf{K}(T))\right]$ and $\displaystyle \E_{\text{TS}}
	\left[N(\mathbf{K}(T))\right]$ denote the expected number of time-slots in which a sub-optimal channel is picked in time-slots 1 to $T$ by UCB and Thompson {\color{black}Sampling} respectively. 
	Then, for $t>K$,
	\begin{align*}
	\E_\text{UCB} \left[N(\mathbf{K}(T))\right] &\leq (K-1)\left(\frac{32 \log T}{\Delta^{2}} + 1 + \frac{\pi^{2}}{3} \right), \\
	\E_\text{TS} \left[N(\mathbf{K}(T))\right] &\leq \OO(K  \log T){\color{black},}
	\end{align*}
	where $\Delta = \mu^*-\max_{k\neq k^*}\mu_{k}$.
\end{lemma}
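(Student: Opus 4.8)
The plan is to recognize that this lemma is a pure bookkeeping reduction to classical regret bounds for stochastic bandits, so the work lies in transcribing \cite{auer2002finite} and \cite{kaufmann2012thompson} with attention to the specific confidence radius used in Algorithm \ref{algo:UCB}. First I would write $N(\mathbf{K}(T)) = \sum_{k \neq k^*} T_k(T+1)$, where $T_k(T+1)$ is the number of time-slots in $\{1,\dots,T\}$ in which channel $k$ is scheduled. Under Assumption \ref{assumption:initialConditions} the scheduler ignores all observations from slots $t \leq 0$, so the sequence of decisions from $t=1$ onward is exactly that of a standard $K$-armed Bernoulli bandit with means $\mu_1,\dots,\mu_K$; hence the per-arm pull-count bounds of \cite{auer2002finite, kaufmann2012thompson} apply verbatim to each $T_k(T+1)$. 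The condition $T>K$ simply guarantees that the initial round-robin phase (slots $1$ to $K$, one pull per channel) has completed, so that the index-based analysis is in force.

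For UCB, I would follow the proof of Theorem 1 in \cite{auer2002finite}, taking care that Algorithm \ref{algo:UCB} uses the wider confidence radius $\sqrt{8\log t / T_k(t-1)}$ in place of the classical $\sqrt{2\log t / T_k(t-1)}$. A suboptimal channel $k$ is scheduled at time $t$ only if its index exceeds that of $k^*$; decomposing this event shows it can occur either because a confidence interval fails---an event whose probability, summed over $t$, contributes the $\pi^2/3$ term through the standard $\sum_t t^{-2}$ estimate---or because channel $k$ has so far been pulled fewer than $\ell_k$ times, where $\ell_k$ is the smallest integer satisfying $2\sqrt{8\log T/\ell_k} \leq \Delta_k$ with $\Delta_k = \mu^* - \mu_k$. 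This forces $\ell_k = \lceil 32\log T/\Delta_k^2 \rceil$ and hence $\E[T_k(T+1)] \leq 32\log T/\Delta_k^2 + 1 + \pi^2/3$. Since $\Delta = \min_{k\neq k^*}\Delta_k$, replacing each $\Delta_k$ by $\Delta$ can only enlarge the bound, and summing over the $K-1$ suboptimal channels yields the stated UCB inequality.

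For Thompson Sampling, I would directly invoke Theorem 2 of \cite{kaufmann2012thompson}, whose Beta--Bernoulli sampling scheme coincides exactly with the update in Algorithm \ref{algo:Ths}. That result gives $\E[T_k(T+1)] = \OO(\log T)$ for each suboptimal channel, so summing over the $K-1$ such channels delivers $\E_{\text{TS}}[N(\mathbf{K}(T))] = \OO(K\log T)$.

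The only non-routine point is matching constants: the coefficient $8$ under the root in Algorithm \ref{algo:UCB} is what produces the factor $32$ (rather than the $8$ of classical UCB1) in the threshold $\ell_k$, so I would verify that single Chernoff--Hoeffding step with care. Everything else is a faithful transcription of \cite{auer2002finite, kaufmann2012thompson}, combined with the elementary observation that bounding each gap $\Delta_k$ below by the minimum gap $\Delta$ is valid and with the reduction $N(\mathbf{K}(T)) = \sum_{k\neq k^*}T_k(T+1)$ justified by Assumption \ref{assumption:initialConditions}.
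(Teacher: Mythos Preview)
Your proposal is correct and matches the paper's approach: the paper itself offers no argument beyond the sentence ``The next lemma summarizes the results from Theorem~1 in \cite{auer2002finite} and Theorem~2 in \cite{kaufmann2012thompson},'' so the lemma is purely a citation, and your reduction $N(\mathbf{K}(T))=\sum_{k\neq k^*}T_k(T+1)$ together with the per-arm bounds from those references (with the constant $32$ correctly adjusted for the $\sqrt{8\log t/T_k}$ radius in Algorithm~\ref{algo:UCB}) is exactly what is needed.
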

We now use Lemmas \ref{lemma:expectedAgeBound2} and \ref{lemma:UCBbound} to prove Theorems \ref{theo:UB_AoI_regret_UCB_impro} and \ref{theo:UB_AoI_regret_TS_impro}.

\begin{proof}(Proof of Theorems \ref{theo:UB_AoI_regret_UCB_impro} and \ref{theo:UB_AoI_regret_TS_impro})\\
	Note that by Assumption \ref{assumption:initialConditions},
	$$\sum_{t=1}^T \E[a^*(t)]= \frac{T}{\mu^*}.$$
	From Lemma \ref{lemma:expectedAgeBound2}, we have that,
	\begin{align*}
	\sum_{t = 1}^{T} \E[a(t)]  \leq \frac{T}{\mu^*}+\frac{1-\mu^*}{\mu^*\mu_{\text{min}}}+\left(\frac{1}{\mu_{\text{min}}}-\frac{1}{\mu^*}\right)\E[N(\bfK).
	\end{align*}
	The results then follow by Lemma \ref{lemma:UCBbound}.
\end{proof}

\subsection{Proof of Theorems \ref{theo:UB_AoI_regret_Q_UCB} and \ref{theo:UB_AoI_regret_Q_THS}}

We use the following lemmas to prove Theorems \ref{theo:UB_AoI_regret_Q_UCB} and \ref{theo:UB_AoI_regret_Q_THS}. 

\begin{lemma}
	\label{lemma:expectedAgeBound_forQ}
	Let $k(t)$ denote the index of the communication channel used in time-slot $t$ and $k^*$ be the index of the optimal channel. Let $\mathbf{K}(T) = \{k(1), k(2), \cdots, k(T)\}$ be the sequence of channels used in time-slots $1$ to $T$ and $E_t$ be the event that $k(\tau) = k^*$ for $t-c \log T + 1 \leq \tau \leq t$. 	
	Then, for $c = \frac{-1}{\log(1-\mu^*)}$,
	\begin{align*}
	\sum_{t = 1}^{T} \E[a(t)]  \leq \frac{T}{\mu^{*}} + \frac{c \log T+1}{\mu_{\text{min}}}  + \frac{1}{\mu_{\text{min}}} \E \left[\sum_{t = c \log T + 1}^{T} \mathbbm{1}_{E_{t}^c }\right].
	\end{align*}
\end{lemma}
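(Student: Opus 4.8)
The plan is to control $\E[a(t)]$ one slot at a time by conditioning on the event $E_t$ that the optimal channel was used throughout the most recent window of length $c\log T$. The starting point is the product representation of the conditional expected AoI underlying the proof of Lemma~\ref{lemma:twoCases}: conditioned on the entire channel sequence $\mathbf{K}(T)$,
\begin{align*}
\E[a(t)\mid\mathbf{K}(T)]=\sum_{\tau=0}^{\infty}\prod_{i=1}^{\tau}\left(1-\mu_{k(t-i)}\right),
\end{align*}
with the empty product ($\tau=0$) equal to one. Because the rewards are independent given $\mathbf{K}(T)$, and because $E_t$ is measurable with respect to $\mathbf{K}(T)$, I would write $\E[a(t)\,\Indi_{E_t}]=\E\big[\Indi_{E_t}\,\E[a(t)\mid\mathbf{K}(T)]\big]$ and treat $E_t$ and $E_t^c$ separately; this is what decouples the adaptive channel choices from the reward realizations.

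On $E_t$ every factor with $1\leq i\leq c\log T$ equals $1-\mu^*$, so splitting the sum at $\tau=c\log T$ and bounding each remaining factor by $1-\mu_{\text{min}}$ gives
\begin{align*}
\Indi_{E_t}\,\E[a(t)\mid\mathbf{K}(T)]\leq \sum_{\tau=0}^{c\log T}(1-\mu^*)^{\tau}+(1-\mu^*)^{c\log T}\sum_{s=1}^{\infty}(1-\mu_{\text{min}})^{s}\leq\frac{1}{\mu^*}+\frac{(1-\mu^*)^{c\log T}}{\mu_{\text{min}}}.
\end{align*}
The decisive point is that $c=-1/\log(1-\mu^*)$ is chosen precisely so that $(1-\mu^*)^{c\log T}=1/T$, whence this contribution is at most $1/\mu^*+1/(\mu_{\text{min}}T)$. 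Summing over the at most $T$ slots $t\geq c\log T+1$ produces the leading $T/\mu^*$ together with a harmless $1/\mu_{\text{min}}$.

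For the complement $E_t^c$, and for the initial slots $t\leq c\log T$ where no such window exists, I would invoke the universal bound $\E[a(t)\mid\mathbf{K}(T)]\leq\sum_{\tau=0}^{\infty}(1-\mu_{\text{min}})^{\tau}=1/\mu_{\text{min}}$, valid because $1-\mu_{k(t-i)}\leq 1-\mu_{\text{min}}$ for every slot. This charges $1/\mu_{\text{min}}$ to each of the first $c\log T$ slots and $\frac{1}{\mu_{\text{min}}}\Indi_{E_t^c}$ to each later slot. Assembling the three contributions yields
\begin{align*}
\sum_{t=1}^{T}\E[a(t)]\leq\frac{T}{\mu^*}+\frac{c\log T+1}{\mu_{\text{min}}}+\frac{1}{\mu_{\text{min}}}\,\E\!\left[\sum_{t=c\log T+1}^{T}\Indi_{E_t^c}\right],
\end{align*}
which is exactly the claim.

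The step I expect to require the most care is the conditioning itself: the indices $k(\tau)$ are random and chosen adaptively, so $E_t$ is correlated with past reward outcomes; conditioning on the full sequence $\mathbf{K}(T)$ before averaging over rewards is what makes $E_t$ measurable and the per-slot failures independent, legitimizing the split above. A minor secondary issue is the off-by-one bookkeeping between the window $\{t-c\log T+1,\dots,t\}$ defining $E_t$ and the product index $i$, together with the integrality of $c\log T$; both affect only the constants and not the stated bound.
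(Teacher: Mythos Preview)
Your proposal is correct and follows essentially the same route as the paper's own proof: condition on the full channel sequence to obtain the product representation of $\E[a(t)\mid\mathbf{K}(T)]$, split into $E_t$ and $E_t^c$, bound the $E_t$-contribution by $1/\mu^*+1/(\mu_{\text{min}}T)$ via the choice $(1-\mu^*)^{c\log T}=1/T$, bound both the $E_t^c$-contribution and the initial $c\log T$ slots by the universal $1/\mu_{\text{min}}$, and sum. Your treatment of the conditioning (noting that $E_t$ is $\mathbf{K}(T)$-measurable so that the reward products factor) is in fact slightly more explicit than the paper's, but the argument is the same.
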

\begin{proof} By definition,
	$
	\PP(a(t) > \tau) = \prod_{i=0}^{\tau} \left(1-\mu_{k(t-i)}\right).
	$
	Note that since $a(t) \geq 1$ for all $t$,
	$
	\E[a(t)] = \sum_{\tau=0}^{\infty} \PP(a(t) > \tau).
	$
	It follows that,
	\begin{align}
	\E[a(t)]= \E[\E[a(t)]] = \E\left[\sum_{\tau=0}^{\infty} \PP(a(t) > \tau)\right] \E\left[ \sum_{\tau=0}^{\infty}\prod_{i=0}^{\tau} \left(1-\mu_{k(t-i)}\right)\right].
	\label{eq:doubleExpectation}
	\end{align}
	For $t \geq c \log T$, we define $E_t$ as the event that $k(\tau) = k^*$ for $t-c \log T + 1 \leq \tau \leq t$. Then,
	\begin{align*}
	\E\left[ \sum_{\tau=0}^{\infty}\prod_{i=0}^{\tau} \left(1-\mu_{k(t-i)}\right)\bigg | E_t\right]
	\leq \sum_{i=1}^{c \log T}\prod_{j=0}^{i}(1-\mu^{*})+ \sum_{i=c \log T+1}^{\infty}
	(1-\mu^{*})^{c \log T}\prod_{j=c \log T+1}^{i}(1-\mu_{\text{min}}).
	\end{align*}
	Note that,
	\begin{align*}
	&\hspace{0.14in} \sum_{i=1}^{c \log T}\prod_{j=0}^{i}(1-\mu^{*}) \leq \sum_{i=1}^{\infty}\prod_{j=0}^{i}(1-\mu^{*}) = \frac{1}{\mu^{*}}, \\
	\text{and } & \sum_{i=c \log T+1}^{\infty}(1-\mu^{*})^{c \log T}\prod_{j=c \log T+1}^{i}(1-\mu_{\text{min}})  \leq (1-\mu^{*})^{c \log T}\frac{1}{\mu_{\text{min}}} = \frac{1}{\mu_{\text{min}}T}.
	\end{align*}
	It follows that
	\begin{align}
	\label{eq:age_in_the_good_case}
	&\E\left[ \sum_{\tau=0}^{\infty}\prod_{i=0}^{\tau} \left(1-\mu_{k(t-i)}\right)\bigg | E_t\right] \leq \frac{1}{\mu^{*}} +\frac{1}{\mu_{\text{min}}T}.
	\end{align}
	Moreover, since $\mu_{k(t)} \geq \mu_{\min}$, for all $t$, 
	\begin{align}
	\label{eq:age_in_the_bad_case}
	&\E\left[ \sum_{\tau=0}^{\infty}\prod_{i=0}^{\tau} \left(1-\mu_{k(t-i)}\right)\bigg | E_t^{c}\right] \leq \frac{1}{\mu_{\text{min}}}.
	\end{align}	
	\noindent Note that
	\begin{align}
	E_{t}^c = \bigcup_{\tau=t - c \log T + 1}^{t} \{k(\tau) \neq k^*\}, \ 
	\mathbbm{1}_{E_{t}^c } \leq \sum_{\tau=t - c \log T + 1}^{t} \mathbbm{1}_{k(\tau) \neq k^*} 
	\label{eq:expectation_indicator}
	\end{align}	
	From \eqref{eq:doubleExpectation}, \eqref{eq:age_in_the_good_case}, \eqref{eq:age_in_the_bad_case}, and \eqref{eq:expectation_indicator}, 
	\begin{align}
	\label{eq:usefulForQ}
	\sum_{t = 1}^{T} \E[a(t)]&= \sum_{t = 1}^{c \log T} \E[a(t)] + \sum_{t = c \log T + 1 }^{T} \E[a(t)]\nn\\
	&\leq \frac{c \log T}{\mu_{\text{min}}} + \frac{T-c \log T}{\mu^{*}} +\frac{T-c \log T}{\mu_{\text{min}}T}+\frac{1}{\mu_{\text{min}}} \E \left[\sum_{t = c \log T + 1}^{T} \mathbbm{1}_{E_{t}^c } \right].
	\end{align}
\end{proof}

\begin{lemma}
	\label{lemma:boundForQPolicies}
	Let $E_t$ be the event that $k(\tau) = k^*$ for $t-c \log T + 1 \leq \tau \leq t$. Let $\E_{\text{Q-UCB}}[ \ ]$ and $\E_{\text{Q-TS}}[ \ ]$ denote expectation under the Q-UCB and Q-TS policies. Then, 
\begin{align*}
\E_{\text{Q-UCB}}\left[\sum_{t = c \log T + 1}^T \mathbbm{1}_{E_{t}^c }\right] &\leq cK \log^4 T + O\left(\frac{K}{T^2} \right), \\
\E_{\text{Q-TS}}\left[\sum_{t = c \log T + 1}^T \mathbbm{1}_{E_{t}^c }\right] &\leq cK \log^4 T + O\left(\frac{K}{T^2} \right).
\end{align*}
\end{lemma}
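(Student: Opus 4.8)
The plan is to reduce both claims to the sub-optimal-pull bounds for the Q-policies established in \cite{krishnasamy2016learning}, via a window-overlap counting argument that is identical for Q-UCB and Q-TS. First I would observe, exactly as in \eqref{eq:expectation_indicator}, that a union bound over the length-$c\log T$ window defining $E_t$ gives
$$\mathbbm{1}_{E_t^c}\le\sum_{\tau=t-c\log T+1}^{t}\mathbbm{1}_{k(\tau)\neq k^*}.$$
Summing this over $t$ from $c\log T+1$ to $T$ and swapping the order of summation, the key point is that a sub-optimal pull in slot $\tau$ is counted only for those $t$ with $\tau\le t\le\tau+c\log T-1$, i.e. in at most $c\log T$ windows. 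Hence
$$\sum_{t=c\log T+1}^{T}\mathbbm{1}_{E_t^c}\le c\log T\sum_{\tau=1}^{T}\mathbbm{1}_{k(\tau)\neq k^*}=c\log T\cdot N(\bfK),$$
and taking expectations yields $\E\!\left[\sum_{t}\mathbbm{1}_{E_t^c}\right]\le c\log T\cdot\E[N(\bfK)]$. It therefore remains only to bound $\E[N(\bfK)]$ for each policy.

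Next I would bound $\E[N(\bfK)]=\sum_{\tau=1}^{T}\PP(k(\tau)\neq k^*)$ by splitting each slot into its forced-exploration and exploitation contributions. In a forced-exploration slot, which occurs with probability $\min\{1,3K\log^2\tau/\tau\}$, a sub-optimal channel is chosen with probability $(K-1)/K$, contributing at most $3(K-1)\log^2\tau/\tau$; summing over $\tau$ and using $\sum_{\tau=1}^{T}\log^2\tau/\tau\le\int_1^T(\log^2 x)/x\,dx=\tfrac{1}{3}\log^3 T$ gives a total of at most $(K-1)\log^3 T\le K\log^3 T$. This is precisely the term that, after the factor $c\log T$ from the reduction, produces the dominant $cK\log^4 T$. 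The remaining exploitation-error contribution is controlled by the concentration arguments of \cite{krishnasamy2016learning}: because the forced exploration guarantees enough samples and the Q-policies use the inflated $\log^2 t$ confidence radius (respectively the corresponding Beta posteriors), the per-slot probability of an exploitation mistake is super-polynomially small, so that its total contribution, again multiplied by $c\log T$, collapses into the residual $\BigO(K/T^2)$.

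The routine pieces are the counting argument and the forced-exploration bound; the main obstacle is the exploitation-error bound, which is where the cited results of \cite{krishnasamy2016learning} do the real work. For Q-UCB this amounts to a Hoeffding-type argument showing that the empirical best channel coincides with $k^*$ once $T_k(\tau)$ is large, whereas for Q-TS the analogous statement that the Beta posterior concentrates on $k^*$ is considerably more delicate. The step I would treat most carefully is verifying that both error terms are small enough that, even after the extra $c\log T$ factor, they sum only to $\BigO(K/T^2)$ rather than to a constant. Since both policies are covered by the same sub-optimal-pull analysis in \cite{krishnasamy2016learning}, the two stated bounds follow in parallel, establishing the lemma.
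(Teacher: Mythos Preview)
Your approach is correct and reaches the stated bound, but the decomposition differs from the paper's. The paper does not first pass to $N(\bfK)$; instead it splits $E_t^c$ at the window level into two sub-events: $E_t^{(1)}$, the event that a forced-exploration slot ($Ex(\tau)=1$) occurs somewhere in the window $[t-c\log T+1,t]$, and $E_t^{(2)}$, the event that no forced exploration occurs anywhere in the window yet some $k(\tau)\neq k^*$ does. It then invokes \cite{krishnasamy2016learning} directly for each piece: the discussion after their Corollary~7 gives $\E\bigl[\sum_t\mathbbm{1}_{E_t^{(1)}}\bigr]\le cK\log^4 T$, and their Lemma~9 gives $\E\bigl[\sum_t\mathbbm{1}_{E_t^{(2)}}\bigr]=O(K/T^2)$ for $T$ large enough. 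Your route---union bound over the window, swap of summation to extract the $c\log T$ overlap factor, then split $N(\bfK)$ into forced-exploration pulls and exploitation mistakes---is more self-contained on the exploration side (you derive the $K\log^3 T$ explicitly rather than citing it), but on the exploitation side it requires a per-slot error bound from \cite{krishnasamy2016learning} rather than the window-level statement the paper cites; since their Lemma~9 is itself built on per-slot concentration, this is available and your argument goes through. The paper's decomposition has the minor advantage that the two cited bounds appear in \cite{krishnasamy2016learning} in exactly the form needed, with no extra $c\log T$ factor to carry on the exploitation term.
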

\begin{proof}
	Let $E_t^{(1)}$ be the event that $Ex(\tau) = 1$ for some $\tau \in t-c \log T + 1$ to $t$ and $E_t^{(2)}$ be the event that $Ex(\tau) = 0$ for $t-c \log T + 1 \leq \tau \leq t$ and $k(\tau) \neq k^*$ for some $\tau \in t-c \log T + 1$ to $t$. It follows that
	\begin{align}
	\label{eq:Eequalse1pluse2}
	\sum_{t = c \log T + 1}^T \mathbbm{1}_{E_{t}^c } \leq \sum_{t = c \log T + 1}^T \mathbbm{1}_{E_{t}^{(1)} } + \sum_{t = c \log T + 1}^T \mathbbm{1}_{E_{t}^{(2)} }. 
	\end{align}	
	By the dicussion after Corollary 7 in the supplementary material for \cite{krishnasamy2016learning},
	\begin{align}
	\label{eq:e1UCB}
	\E_{\color{black}\text{Q-UCB}}\left[\sum_{t = c \log T + 1}^T \mathbbm{1}_{E_{t}^{(1)} }\right] &\leq cK \log^4 T, \\
	\label{eq:e1THS}
	\E_{\text{Q-TS}}\left[\sum_{t = c \log T + 1}^T \mathbbm{1}_{E_{t}^{(1)} }\right] &\leq cK \log^4 T. 
	\end{align}
	By Lemma 9 in the supplementary material for \cite{krishnasamy2016learning}, for $T$ large enough,
	\begin{align}
	\label{eq:e2UCB}
	\E_{\text{Q-UCB}}\left[ \sum_{t = c \log T + 1}^T \mathbbm{1}_{E_{t}^{(2)} } \right] &= O\left(\frac{K}{T^2} \right), \\
	\label{eq:e2THS} 
	\E_{\text{Q-TS}}\left[ \sum_{t = c \log T + 1}^T \mathbbm{1}_{E_{t}^{(2)} } \right] &= O\left(\frac{K}{T^2} \right).
	\end{align}
	The results follow from \eqref{eq:Eequalse1pluse2}, \eqref{eq:e1UCB}, \eqref{eq:e1THS} \eqref{eq:e2UCB} and \eqref{eq:e2THS}.
\end{proof}

\begin{proof}[Proof of Theorems \ref{theo:UB_AoI_regret_Q_UCB} and \ref{theo:UB_AoI_regret_Q_THS}]
	
	Recall that by Assumption \ref{assumption:initialConditions},
	$$\sum_{t=1}^T \E[a^*(t)]= \frac{T}{\mu^*}.$$
	The result then follows by Lemmas \ref{lemma:expectedAgeBound_forQ} and \ref{lemma:boundForQPolicies}.
\end{proof}

\section{Conclusions}
We consider a variant of MAB, called AoI bandits. We first characterize a lower bound on the regret achievable by any policy for AoI bandits. Next, we analyze the performance of popular policies, namely UCB and Thompson Sampling for our setting and prove that they are order-optimal for AoI bandits. In addition, we analyze the performance of two policies, namely, Q-UCB and Q-Thompson Sampling proposed in \cite{krishnasamy2016learning}. The commonality between these four policies is that they are AoI-agnostic, i.e., conditioned on the number of times each channel is used in the past and the number of successful communications on each channel, these policies make decisions independent of the current AoI. We then propose four AoI-aware policies, which also take the current value of AoI into account while making decisions. Via simulations, we observe that the AoI-aware policies outperform the AoI-agnostic policies. 
\bibliographystyle{IEEEtran}
\bibliography{IEEEabrv,ref}

\newpage
\appendix

\begin{algorithm}[h]
		\DontPrintSemicolon 
		\textbf{Initialise:} Set $\hat{\mu}_k=0$ to be the estimated success probability of Channel $k$, $T_k(0)=0$ $\forall$ $k\in[K]$.\;
		\While{$1\leq t \leq K$}{
			Schedule update on Channel $k(t)=t$\;
			Receive rewards $X_{k(t)}(t)\sim \text{Ber}(\mu_{k(t)})$\; $\hat{\mu}_{k(t)}=X_{k(t)}(t)$\;
			$T_{k(t)}(t)=1$\;
			$t=t+1$}
		\While{$t\geq K+1$}{
			$\alpha_{k}(t)=\hat{\mu}_k(t) T_k(t-1)+1$,\;
			$\beta_{k}(t)=(1-\hat{\mu}_k(t)) T_k(t-1)+1$,\;
			Let $\text{limit(t)}=\min\limits_{k\in [K]}\frac{\alpha_{k}(t)+\beta_{k}(t)}{\alpha_{k}(t)}$\;
			\uIf {$a(t-1)>\text{limit(t)}$}{
				\textit{Exploit:} Select channel with highest estimated success probability
			}
			\Else{
				\textit{Explore:}\;
				Schedule update on Channel $k(t)$ such that $$ k(t)=\arg \max_{k\in [K]}\hat{\mu}_k(t)+\sqrt{\frac{8\log t}{T_k(t-1)}}$$}
			Receive reward $X_{k(t)}(t)\sim \text{Ber}(\mu_{k(t)})$\;
			$\hat{\mu}_{k(t)}=(\hat{\mu}_{k(t)}\cdot T_{k(t)}(t-1)+X_{k(t)}(t))/(T_{k(t)}(t-1)+1)$\;
			$T_{k(t)}(t)=T_{k(t)}(t-1)+1$\;
			$t=t+1$}
		\caption{{\sc AoI-Aware Upper Confidence Bound (AA-UCB)}}
		\label{algo:UCB_Thr}
\end{algorithm}

\begin{algorithm}[h]
	\DontPrintSemicolon 
	\textbf{Initialise:} Set $\hat{\mu}_k=0$ to be the estimated success probability of Channel $k$, $T_k(0)=0$ $\forall$ $k\in[K]$.\;
	\While{$1\leq t \leq K$}{
		Schedule update on Channel $k(t)=t$\;
		Receive rewards $X_{k(t)}(t)\sim \text{Ber}(\mu_{k(t)})$\; $\hat{\mu}_{k(t)}=X_{k(t)}(t)$\;
		$T_{k(t)}(t)=1$\;
		$t=t+1$}
	\While{$t\geq K+1$}{let $E(t)\sim\text{Ber}\left(\min\left\{ 1,3K\frac{\log^2 t}{t}\right\}\right)$\;
		
		\uIf {$E(t)=1\;\&\&\; a(t)<Thr$}{
			\textit{Explore:} Schedule update on a channel chosen uniformly at random
		}
		\Else{
			\textit{Exploit:} Schedule update on channel $k(t)$ such that $$ k(t)=\arg \max_{k\in [K]}\hat{\mu}_k(t)+\sqrt{\frac{\log^2 t}{2T_k(t-1)}}$$
		}
		Receive reward $X_{k(t)}(t)\sim \text{Ber}(\mu_{k(t)})$\;
		$\hat{\mu}_{k(t)}=(\hat{\mu}_{k(t)}\cdot T_{k(t)}(t-1)+X_{k(t)}(t))/(T_{k(t)}(t-1)+1)$\;
		$T_{k(t)}(t)=T_{k(t)}(t-1)+1$\;
		$t=t+1$}
	\caption{{\sc AoI-Aware Q-Upper Confidence Bound}(AA Q-UCB)}
	\label{algo:Q_UCB_Thr}
\end{algorithm}


\begin{algorithm}[h]
	\DontPrintSemicolon 
	\textbf{Initialise:} Set $\hat{\mu}_k=0$ to be the estimated success probability of Channel $k$, $T_k(0)=0$ $\forall$ $k\in[K]$.\;
	\While{$t\geq 1$}{let $E(t)\sim\text{Ber}\left(\min\left\{ 1,3K\frac{\log^2 t}{t}\right\}\right)$\;
		
		\uIf {$E(t)=1\;\&\&\; a(t)< Thr$}{
			\textit{Explore:} Schedule a update on a channel chosen uniformly at random\;
		}
		\Else{
			\textit{Exploit:}\;
				$\alpha_{k}(t)=\hat{\mu}_k(t) T_k(t-1)+1$,\;
				$\beta_{k}(t)=(1-\hat{\mu}_k(t)) T_k(t-1)+1,$\;
				For each $k\in[K]$, pick a sample $\hat{\theta}_k(t)$ of distribution,$$\hat{\theta}_k(t)\sim \text{Beta}(\alpha_{k}(t),\beta_{k}(t)).$$
			Schedule update on a Channel $k(t)$ such that $$ k(t)=\arg \max_{k\in [K]}\hat{\theta}_k(t)$$
		}
		Receive reward $X_{k(t)}(t)\sim \text{Ber}(\mu_{k(t)})$\;
		$\hat{\mu}_{k(t)}=(\hat{\mu}_{k(t)}\cdot T_{k(t)}(t-1)+X_{k(t)}(t))/(T_{k(t)}(t-1)+1)$\;
		$T_{k(t)}(t)=T_{k(t)}(t-1)+1$\;
		$t=t+1$}
	\caption{{\sc AoI-Aware Q-Thompson Sampling }(AA Q-TS)}
	\label{algo:Q_Ths_Thr}
\end{algorithm}

\end{document}